\newcommand\wt{\widetilde}
\newcommand\wc{\wt c}
\newcommand\wB{\wt B}
\newcommand\wE{\wt E}
\newcommand\wO{\wt O}
\newcommand\wP{\wt P}
\newcommand\wR{\wt R}
\newcommand\wV{\wt V}
\newcommand\wX{\wt X}
\newcommand\wY{\wt Y}
\newcommand\pt{\text{\textup{pt}}}
\newcommand\wh{\widehat}
\newtheorem{theorem}{Theorem}[section]
\newtheorem{lemma}[theorem]{Lemma}
\newtheorem{proposition}[theorem]{Proposition}
\newtheorem{corollary}[theorem]{Corollary}
\newtheorem{axioms}[theorem]{Axioms}
\theoremstyle{remark}
\newtheorem{definition}[theorem]{Definition}
\newtheorem{remark}[theorem]{Remark}
\newcommand\cK{\mathcal{K}}
\newcommand\cO{\mathcal{O}}
\newcommand\cP{\mathcal{P}}
\newcommand\bp{\mathbf{p}}
\newcommand\bC{\mathbb{C}}
\newcommand\bP{\mathbb{P}}
\newcommand\bR{\mathbb{R}}
\newcommand\bT{\mathbb{T}}
\newcommand\bZ{\mathbb{Z}}
\newcommand\Diff{\operatorname{Diff}}
\newcommand\End{\operatorname{End}}
\newcommand\Br{\operatorname{Br}}
\newcommand\Ad{\operatorname{Ad}}
\newcommand\ad{\operatorname{ad}}
\newcommand\Cl{\operatorname{C\ell}}
\newcommand\Homeo{\operatorname{Homeo}}
\newcommand\Isom{\operatorname{Isom}}
\newcommand{\bF}{{\mathbb F}_2}
\newcommand{\co}{\colon\,}
\newcommand\Ca{$C^*$-algebra}
\newcommand{\lp}{\textup{(}}
\newcommand{\rp}{\textup{)}}
\begin{document}
\title[T-duality for circle bundles]
{T-duality for circle bundles via\\ 
noncommutative geometry}
\author{Varghese Mathai}
\address{Department of Pure Mathematics, University of Adelaide,
Adelaide, SA 5005, Australia}
\email{mathai.varghese@adelaide.edu.au}
\urladdr{http://www.maths.adelaide.edu.au/mathai.varghese/}
\author{Jonathan Rosenberg}
\thanks{VM was partially supported by the Australian Research
Council Grant DP110100072. JR was partially supported by US National
Science Foundation Grants DMS-0805003 and DMS-1206159, and also
thanks the Department of Pure Mathematics at The University of Adelaide
for its hospitality during a visit in March 2012. The authors thank
the referee for useful suggestions.}
\address{Department of Mathematics,
University of Maryland,
College Park, MD 20742, USA}
\email{jmr@math.umd.edu}
\urladdr{http://www.math.umd.edu/\raisebox{-.6ex}{\symbol{"7E}}jmr}
\begin{abstract}
Recently Baraglia showed how topological T-duality can be extended to
apply not only to principal circle bundles, but also to non-principal
circle bundles. We show that his results can also be recovered via two
other methods: the homotopy-theoretic approach of Bunke and Schick,
and the noncommutative geometry approach which we previously used for
principal torus bundles. This work has several interesting byproducts,
including a study of the $K$-theory of crossed products by $\wO(2) =
\Isom(\bR)$, the universal cover of $O(2)$, and some interesting facts
about equivariant $K$-theory for $\bZ/2$. In the final section of this
paper, some of these results are extended to the case of bundles with
singular fibers, or in other words, non-free $O(2)$-actions. 
\end{abstract}
\maketitle

\section{Introduction}
\label{sec:intro}
T-duality in physics (the ``T'' standing for ``target space'')
is a duality between two string theories, that
interchanges ``winding'' and ``momentum'' modes and replaces tori 
in the spacetime target space by
their duals. (\emph{Duality} here means a symmetry that relates
theories that mathematically look different but that predict the same
physics.) Topological T-duality, which began from \cite{MR1265457}
and \cite{MR2080959,MR2116165},
aims to capture the basic \emph{topological} content in T-duality.
It has expanded to a vast literature, but for a
general survey, one can see the book \cite{MR2560910}. Roughly
speaking, there are three main approaches to topological
T-duality: via homotopy theory (as in \cite{MR2130624}),
via noncommutative geometry (as in \cite{MR2116734}),
and via stacks, groupoids, and/or equivariant topology
(as in \cite{MR2491618,MR2797285,MW12}). In the usual
setup of topological T-duality, one is given a principal torus bundle
$E \xrightarrow{p} X$ together with a class $h\in H^3(E,\bZ)$, and
aims (maybe after imposing other conditions)
to produce from this data a ``dual bundle'' $E^\#
\xrightarrow{p^\#} X$ and a dual class $h^\#\in H^3(E,\bZ)$.  From the
point of view of the original situation in physics, $E$ here is
spacetime for a type II superstring theory, $h$ is the H-flux (the
cohomology class of the field strength of the
Neveu-Schwarz B-field), and we are suppressing non-topological data
such as the actual B-field and the metric.

Recently, David Baraglia \cite{MR2985336,Baraglia1} showed that this setup can
be generalized to the case where $E\xrightarrow{p} X$  is a
\emph{non-principal} circle bundle, and even \cite{Baraglia2} extended
many of his results to affine torus bundles of arbitrary rank.

Our aim in this paper is to give new approaches to Baraglia's results
on circle bundles, that
show how they also fit into the topological framework initiated by
Bunke and Schick in \cite{MR2130624} and the noncommutative geometry
approach which we used in \cite{MR2116734,MR2222224}.  This grew out
of a larger project, still in progress, of trying to identify a good
larger category of ``generalized bundles'' that incorporates the
bundles of \cite{MR2511635,MR2520349} and is well-adapted to the 
noncommutative geometry approach to topological T-duality.

We begin with a few definitions. 
\begin{definition}
\label{def:circlebundle}
A \emph{principal circle bundle} over
a space $X$ is a map $E\xrightarrow{p} X$, where $E$ is a space
equipped with a free action of $\bT$ (the unit circle in the complex
plane, with its group structure coming from complex multiplication)
for which $p$ is the projection onto the orbit space $E/\bT$ and the
map is locally trivial, in the sense that for any $x\in X$, there is
an open set $U$ containing $x$ for which one has a commuting diagram
\[
\xymatrix{p^{-1}(U) \ar[rr]^{\cong} \ar[dr]^p && U \times \bT
\ar[dl]_{\text{pr}_1}\\
& \,U. &}
\]
A \emph{non-principal circle bundle} over $X$, on the other hand, is a
topological fiber bundle over $X$ with fiber $S^1$ and structure group
$\Homeo(S^1)$. Since the inclusions $O(2)=\Isom(S^1)\hookrightarrow 
\Diff(S^1)\hookrightarrow \Homeo(S^1)$ are
homotopy equivalences, as observed in \cite{Baraglia1},
there is no loss of generality in assuming that
the bundle has structure group $O(2)$.  Thus the difference between
principal and non-principal circle bundles simply amounts to replacing
$SO(2)$ (which can be identified with $\bT=U(1)$ in the usual way) 
by $O(2)$. It is convenient to view $O(2)$ as $SO(2)\rtimes
\{1,j\}$, where $j$ is the reflection matrix $$\begin{pmatrix}
  1&0\\0&-1 \end{pmatrix}.$$ Thus for reasonable spaces $X$ (in
applications, $X$ will always be a manifold of finite homotopy type),
a non-principal circle bundle over $X$ is classified by a map
$\varphi\co X\to
BO(2)$ up to homotopy. This in turn gives rise to a principal
$O(2)$-bundle $\pi\co\wE\to X$ and thus a commuting diagram
\[
\xymatrix{\wE \ar[r]^{\wc} \ar[d]^{\wt p} \ar[rd]^\pi & E\ar[d]^p\\
\wX \ar[r]^{c} & \, X .}
\]
Here $E\xrightarrow{p} X$ is the original non-principal circle bundle,
$\wc$ and $c$ are two-fold covering maps, and $\wE\xrightarrow{\wt p}
\wX$ is a \emph{principal} circle bundle gotten by taking the orbit
space of $\wE$ under the subgroup $SO(2)$ of $O(2)$.  The covering
maps $\wc$ and $c$ come from dividing out by the action of $j$. The
covering $\wX \xrightarrow{c} X$ is classified by an element $\xi=w_1(p)
\in H^1(X,\bF)$ ($\bF$ denoting the field of two elements), or
equivalently, by the homotopy class of the 
composite $X\xrightarrow{\varphi} BO(2) \xrightarrow{Bq} B\bZ/2 \simeq
K(\bZ/2, 1)$, where $Bq$ is the map on classifying spaces induced by
the quotient map of topological groups $q\co O(2) \to \bZ/2$. 
Then the isomorphism class of the bundle $p$ is determined by the
combination of $\xi\in H^1(X;\bF)$ and $c_1(\wt p)\in H^2(\wX, \bZ)$,
which we can also view as living in $H^2(X, \bZ_\xi)$, where
$\bZ_\xi$ is the local coefficient system on $X$ locally isomorphic to
$\bZ$ and determined by $\xi$. We call this the Chern class $c_1(p)$.
\end{definition}
\begin{remark}
\label{rem:Gysin}
As with principal $S^1$ bundles, a non-principal circle bundle
$E\xrightarrow{p} X$ comes with a Gysin sequence, that arises from the
fact that the Leray-Serre spectral sequence for computing the cohomology
of the total space of the fibration $S^1\to
E \xrightarrow{p} X$ has only two rows and thus
collapses at $E_3$. However, as the fundamental
group of the base $X$ acts trivially on $H^0$ of the fiber but
nontrivially on $H^1(S^1)\cong \bZ$, the spectral sequence has
$E_2^{p,0} = H^p(X, \bZ)$ but $E_2^{p,1} = H^p(X, \bZ_\xi)$, the 
cohomology of $X$ with local coefficients. Recall that
$\bZ_{\xi}\otimes \bZ_{\xi} \cong \bZ$ as local coefficient systems,
so the cup product of two twisted cohomology classes lives back in
untwisted cohomology. The Gysin sequence thus
takes the form
\[
\cdots \to H^p(X, \bZ_\xi) \xrightarrow{\cup c_1(p)} H^{p+2}(X,
\bZ)  \xrightarrow{p^*} H^{p+2}(E, \bZ) \xrightarrow{p_!}  
H^{p+1}(X, \bZ_\xi) \xrightarrow{\cup c_1(p)} \cdots .
\]
(Note the alternation between twisted and untwisted cohomology.)
\end{remark}

With these preliminaries, we can now state one of Baraglia's main
results:
\begin{theorem}[{Baraglia \cite[Proposition 4.1]{Baraglia1}}] 
\label{thm:Baraglia1}
For any circle bundle 
$E\xrightarrow{p} X$ as above, with invariants $\xi=w_1(p)
\in H^1(X,\bF)$ and $c_1(p)\in H^2(X, \bZ_\xi)$, and for any choice
of ``H-flux'' $h\in H^3(E,\bZ)$,
there is a unique T-dual circle bundle $E^\#\xrightarrow{p^\#} X$ 
and there is a unique T-dual H-flux $h^\#\in H^3(E, \bZ)$, characterized by the
following axioms:
\begin{enumerate}
\item T-duality is natural, so the T-dual of a pull-back is the
  pull-back of the dual.
\item T-duality is involutive, so that $(E, p, h)$ is the T-dual of
  $(E^\#, p^\#, h^\#)$.
\item $\xi=w_1(p)=w_1(p^\#)$, so the double covering of $X$ defined by
  the T-dual is the same as $\wX \xrightarrow{c} X$ defined by $p$.
\item $p_!(h) = c_1(p^\#)$ and $(p^\#)_!(h^\#) = c_1(p)$.
\item $h$ and $h^\#$ agree after pull-back to the ``Poincar\'e
  bundle'' $E\times_X E^\# \to X$.
\end{enumerate}
\end{theorem}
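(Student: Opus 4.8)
The plan is to reduce the non-principal problem to the known principal case on the double cover $\wX \xrightarrow{c} X$ classified by $\xi$, where $p$ becomes the principal circle bundle $\wE \xrightarrow{\wt p} \wX$, and then to descend along the $\bZ/2$-action generated by the reflection $j$. First I would dispose of the dual \emph{bundle}. By the classification recalled in Definition~\ref{def:circlebundle}, a circle bundle over $X$ is determined up to isomorphism by the pair $(w_1, c_1) \in H^1(X, \bF) \times H^2(X, \bZ_\xi)$. Axioms (3) and (4) force $w_1(p^\#) = \xi$ and $c_1(p^\#) = p_!(h)$; since $p_!(h)$ lands in $H^2(X, \bZ_\xi)$ by the Gysin sequence of Remark~\ref{rem:Gysin}, this pair is both admissible and unique, so $E^\# \xrightarrow{p^\#} X$ exists and is forced. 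This simultaneously establishes Axiom (3) and the first half of Axiom (4).

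Next I would construct the dual flux. Pulling back along $\wc$ gives a $\bZ/2$-invariant class $\wt h = \wc^* h \in H^3(\wE, \bZ)$, and over $\wX$ the triple $(\wE, \wt p, \wt h)$ is genuinely principal. Applying the established principal T-duality (Bunke--Schick, or the crossed-product construction of \cite{MR2116734}) produces a principal bundle $\wE^\# \xrightarrow{\wt p^\#} \wX$ with $c_1(\wt p^\#) = \wt p_!(\wt h)$ together with a dual flux $\wt h^\# \in H^3(\wE^\#, \bZ)$. Using that Gysin pushforward commutes with the base change $c$, one checks $\wt p^\#$ is the pullback $c^* p^\#$, matching the two constructions of the dual bundle. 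The deck transformation $\sigma$ of $c$, lifted through $j$, acts on $\wE$ reversing the fibre circle; by the $\bZ/2$-equivariance of the principal T-duality under this circle-inversion automorphism it transports to a compatible involution on $(\wE^\#, \wt h^\#)$.

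The hard part will be the \emph{descent} of $\wt h^\#$ to a class $h^\# \in H^3(E^\#, \bZ)$ on the quotient $E^\# = \wE^\#/\langle \sigma\rangle$. Invariance of $\wt h^\#$ is necessary but not sufficient: whether an invariant class lies in the image of $(\wc^\#)^* \colon H^3(E^\#, \bZ) \to H^3(\wE^\#, \bZ)$ is governed by the Cartan--Leray spectral sequence $H^p(\bZ/2; H^q(\wE^\#, \bZ)) \Rightarrow H^{p+q}(E^\#, \bZ)$ of the double cover, whose higher terms carry genuine $\bZ/2$-cohomology. Controlling these terms, and simultaneously eliminating the resulting ambiguity in $h^\#$, is exactly where the $\bZ/2$-equivariant $K$-theory and the crossed products by $\wO(2) = \Isom(\bR)$ advertised in the abstract should enter: realizing $h$ by a continuous-trace algebra carrying an action of the universal cover $\wO(2) = \bR \rtimes \bZ/2$ of $O(2)$ and forming the crossed product lets Takai duality and Connes' Thom isomorphism perform the descent automatically, producing $(E^\#, h^\#)$ in one step without separating the invariance and lifting issues.

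Finally I would verify the remaining axioms. Axiom (2) (involutivity) and the second half of Axiom (4), $(p^\#)_!(h^\#) = c_1(p)$, follow from the corresponding statements upstairs for the principal T-dual together with the descent. Axiom (5) is checked on the Poincar\'e bundle $E \times_X E^\#$ by pulling everything back to $\wX$, where it reduces to the principal Poincar\'e condition built into $\wt h^\#$. For uniqueness of $h^\#$, the Gysin sequence of Remark~\ref{rem:Gysin} for $p^\#$,
\[
H^1(X, \bZ_\xi) \xrightarrow{\cup c_1(p^\#)} H^3(X, \bZ) \xrightarrow{(p^\#)^*} H^3(E^\#, \bZ) \xrightarrow{(p^\#)_!} H^2(X, \bZ_\xi),
\]
shows that Axiom (4) pins down $h^\#$ modulo the image of $(p^\#)^*$, i.e.\ modulo classes pulled back from $X$; the Poincar\'e condition of Axiom (5) then detects and kills this remaining ambiguity, giving uniqueness.
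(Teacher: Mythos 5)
Your plan is, in outline, the paper's noncommutative-geometry proof (Section \ref{sec:NCG}), but the step you dispose of in a single clause --- ``realizing $h$ by a continuous-trace algebra carrying an action of the universal cover $\wO(2)$'' --- is the heart of that proof, not a given. A free $O(2)$-action on $\wE$ does not automatically lift to an action of $G=\wO(2)$ on $A=CT(\wE,\wc^*(h))$; constructing such a lift, and proving it is unique up to exterior equivalence (without which ``the'' crossed product $A\rtimes G$ is not well defined), is exactly Theorem \ref{thm:Glift}. The paper proves this either by choosing a connection on a principal $PU$-bundle over $E$ representing $h$ and integrating an $H$-equivariant horizontal vector field, or by obstruction theory in Moore--Wigner group cohomology (Lemmas \ref{lem:cohomfinite}--\ref{lem:cohomCT}) combined with the equivariant Brauer group. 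The same missing ingredient reappears in your unproved claim that principal T-duality upstairs is ``$\bZ/2$-equivariant under circle inversion'' and therefore transports the deck involution to $(\wE^\#,\wt h^\#)$. A further misattribution: Connes' Thom isomorphism does not produce $(E^\#,h^\#)$. The identification $A\rtimes G\cong(A\rtimes_\alpha\bR)\rtimes H\cong CT(E^\#,h^\#)$ in Theorem \ref{thm:NCmethod} rests on principal T-duality for $A\rtimes_\alpha\bR$ over $\wX$, the fact that crossed products by finite groups preserve continuous trace \cite{MR920145}, Takai duality, and \cite{RaeWilliams}; the Thom isomorphism for $\wO(2)$ (Theorem \ref{thm:ConnesO2}) is needed only for the $K$-theory statement, Corollary \ref{cor:K-thy}.

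Your uniqueness argument also does not close. From axiom (4) and the Gysin sequence, two candidate fluxes differ by $(p^\#)^*x$ with $x\in H^3(X,\bZ)$, and axiom (5) gives $q^*(p^\#)^*x=0$ for the projection $q\co E\times_X E^\#\to E^\#$. But $q$ is itself a circle bundle, with Chern class $(p^\#)^*c_1(p)$, so by its own Gysin sequence the kernel of $q^*$ on $H^3$ is the image of cup product with $(p^\#)^*c_1(p)$, which need not vanish; hence axiom (5) does not ``detect and kill'' the ambiguity, it only locates it in that image. (Uniqueness must in any case be understood up to isomorphism of pairs, i.e., allowing bundle automorphisms of $E^\#$.) This is precisely why the characterization includes the global axioms (1) and (2), and why the paper's uniqueness proof is global rather than pointwise: by representability (Theorem \ref{thm:classspace}), any natural T-duality is induced by a self-map of the classifying space $R$, and the axioms force that self-map to be the homotopy involution of Lemma \ref{lem:Tdualityinvol}, whose uniqueness is what yields uniqueness of the dual pair. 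Alternatively, in the noncommutative approach, uniqueness comes from the exterior-equivalence clause of Theorem \ref{thm:Glift}, which your sketch omits.
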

In Section \ref{sec:homotopy} below, we will show how to give another
proof of Theorem \ref{thm:Baraglia1} following the outline of the
method in \cite{MR2130624}.  In Section \ref{sec:NCG}, we will give
still another proof using {\Ca} crossed products, which also leads to
the matching of twisted $K$-groups for $(p,h)$ and $(p^\#,h^\#)$ as
found by Baraglia in \cite[Proposition 6.1]{Baraglia1}.

Our method of proof uses an analogue of Connes' ``Thom Isomorphism
Theorem'' \cite{MR605351} for computing $K$-theory for the crossed
product of a {\Ca} by an action of $\wO(2) = \bR \rtimes \{1,
j\}$. The proof of this theorem, which uses a construction in
equivariant $K$-theory from \cite{MR3044609}, occupies Section
\ref{sec:KThy}. Finally, Section \ref{sec:fibr} deals with extension
of the main results to the case of $O(2)$-bundles with singular
fibers, that is $O(2)$-actions that are not necessarily free,
following a method from \cite{MW12}.

\section{A Homotopy-Theoretic Construction}
\label{sec:homotopy}

The following definition is basically copied from \cite[Definition
  2.1]{MR2130624}. 
\begin{definition}
\label{def:pair}
Let $X$ be a topological space having the homotopy type of a CW
complex. A \emph{pair} over $X$ will consist of a (non-principal) circle bundle
$E\xrightarrow{p} X$ in the sense of Definition
\ref{def:circlebundle}, together with a class $h\in H^3(E,\bZ)$.  Two
pairs $(E\xrightarrow{p} X, h)$ and $(E'\xrightarrow{p'} X, h')$ are said
to be \emph{isomorphic} if there is a bundle isomorphism $\varphi$ making the
diagram 
\[
\xymatrix{E \ar[rr]^{\varphi}_{\cong} \ar[dr]^p && E' 
\ar[dl]_{p'}\\
& X &}
\]
commute and satisfying $\varphi^*(h') = h$. Let $\cP(X)$ denote the
set of isomorphism classes of pairs over $X$.  It is obvious that the map
$X\mapsto \cP(X)$ is a homotopy-invariant functor on the
category of spaces with the homotopy type of a CW complex, with values
in pointed sets.  ($\cP(X)$ has no special structure other than that
of a set, but it does has a distinguished basepoint, namely the class
of the trivial pair $(X\times S^1 \xrightarrow{\text{pr}_1} X, 0)$.)
\end{definition}
The following result is then a slight modification of
\cite[Proposition 2.6]{MR2130624}. 
\begin{theorem}
\label{thm:classspace}
The functor $X\mapsto \cP(X)$ is a representable functor on the
homotopy category of spaces with the homotopy type of a CW
complex. More precisely, there is a CW complex $R$, unique up to
homotopy equivalence, with the following properties:
\begin{enumerate}
\item There is a canonical pair $(E \xrightarrow{p} R, h)$ over $R$.
\item For any space $X$ with the homotopy type of a CW complex, every
  pair over $X$ is pulled back from the canonical pair via some map
  $X\to R$.
\item Two pairs over $X$ are isomorphic if and only if their
  classifying maps $X\to R$ are homotopic {\lp}note that $X$ need not
  be pointed, so this is unbased homotopy{\rp}.
\end{enumerate}
Furthermore, we can make the homotopy type of $R$ precise. The space
$R$ is connected, and its universal cover $\wR$ is the classifying space in
\cite[Definition 2.4]{MR2130624}, namely a two-stage Postnikov system
\[
\xymatrix{K(\bZ,3) \ar[r] & \wR \ar[d]\\
& K(\bZ,2)\times K(\bZ,2)}
\]
with $k$-invariant $k\in H^4(K(\bZ,2)\times K(\bZ,2), \,\bZ)$ given by
the cup product of the canonical classes in the two $H^2(\bZ, 2)$
factors.

However, the space $R$ has $\pi_1(R)\cong \bZ/2$, and $R$ is not simple;
$\pi_1(R)$ acts non-trivially on $\pi_2(R)\cong \bZ^2$, with the
generator of $\pi_1(R)$ acting on $\pi_2(R)\cong \bZ^2$ by $-1$.
\end{theorem}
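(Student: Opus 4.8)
The plan is to construct $R$ explicitly as a Borel construction rather than to invoke abstract representability, since this produces the homotopy type at the same time. First I would recall from Bunke and Schick that the simply connected two-stage Postnikov system $\wR$ — with $\pi_2(\wR)\cong\bZ^2$, $\pi_3(\wR)\cong\bZ$, and $k$-invariant the cup product $x_1\cup x_2$ of the two fundamental classes — represents the functor of \emph{principal} pairs (principal circle bundle, H-flux) on CW complexes, the two $\pi_2$-generators recording the Chern class $c_1$ of the bundle and the fibre-integral $p_!(h)$ of the flux, and the $\pi_3$-summand recording the part of $h$ pulled back from $H^3$ of the base. I would then equip $\wR$ with a $\bZ/2$-action and set $R:=\wR\times_{\bZ/2}E\bZ/2$, the homotopy quotient, where $E\bZ/2$ is the contractible total space of the universal $\bZ/2$-bundle.

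The $\bZ/2$-action on $\wR$ is dictated by the geometry: the nontrivial element must act by the self-equivalence that is $-1$ on $\pi_2(\wR)\cong\bZ^2$ and the identity on $\pi_3(\wR)\cong\bZ$. This is exactly the effect of the reflection $j\in O(2)$ reversing the orientation of the circle fibre, since orientation reversal negates both the Chern class and the fibre-integral $p_!(h)$ while fixing the base component of the flux. I would check that such a self-equivalence exists and is unique up to homotopy by verifying that the prescribed action on homotopy groups preserves the $k$-invariant — indeed $(-x_1)\cup(-x_2)=x_1\cup x_2$ — so there is no obstruction to realizing it on the Postnikov tower. Because $\wR$ is simply connected, the fibration $\wR\to R\to B\bZ/2$ coming from the Borel construction shows at once that $R$ is connected, that $\pi_1(R)\cong\bZ/2$, that $\wR$ is its universal cover with $\pi_n(R)\cong\pi_n(\wR)$ for $n\ge 2$, and that $\pi_1(R)$ acts on $\pi_2(R)\cong\bZ^2$ by $-1$; these are precisely the asserted facts about the homotopy type, and in particular $R$ is not simple.

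It remains to verify the universal property, that $[X,R]\cong\cP(X)$ naturally. Here I would use the descent picture already built into Definition \ref{def:circlebundle}: a non-principal pair $(E\xrightarrow{p}X,\,h)$ is the same datum as the double cover $c\co\wX\to X$ classified by $\xi=w_1(p)\in H^1(X,\bF)$ together with the pulled-back \emph{principal} pair $(\wE\xrightarrow{\wt p}\wX,\,\wc^*h)$, equivariant for the deck group $\bZ/2$, whose generator acts through $j$ and hence reverses the fibre orientation. On the other hand a map $f\co X\to R$ is the same as its projection $\bar f\co X\to B\bZ/2$ — which supplies $\xi$ and the cover $\wX=\bar f^*E\bZ/2$ — together with a $\bZ/2$-equivariant lift $\wt f\co\wX\to\wR$, and by the Bunke–Schick classification $\wt f$ is exactly an equivariant principal pair on $\wX$. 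Matching the geometric deck action (fibre-orientation reversal) with the $\bZ/2$-action on $\wR$ defined above yields a natural bijection between the two sets of data; the canonical pair over $R$ is then the pair classified by $\mathrm{id}_R$, i.e. the descent of the universal equivariant principal pair on $\wR$. This establishes properties (1)–(3), and uniqueness of $R$ up to homotopy follows formally from the Yoneda lemma.

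The main obstacle is the descent/equivariance bookkeeping in this last step. One must check that passing to the double cover $\wX$ really does convert non-principal pairs into $\bZ/2$-equivariant principal pairs bijectively — including the degenerate case $\xi=0$, where $\wX=X\sqcup X$ and equivariance forces the pair to descend to an honest principal pair on $X$ — and, crucially, that the deck action on the Bunke–Schick invariants agrees with the declared $-1$ action on $\pi_2(\wR)$. Concretely, equivariance $\wt f\circ\sigma=\tau\circ\wt f$ forces $\sigma^*c_1(\wE)=-c_1(\wE)$ and $\sigma^*\wt p_!(\wc^*h)=-\wt p_!(\wc^*h)$, so the task is to confirm that both invariants genuinely change sign under $j$ while the untwisted $H^3$-component of the flux is fixed. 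This is where the sign conventions, the identification $\bZ_\xi\otimes\bZ_\xi\cong\bZ$, and the twisted Gysin sequence of Remark \ref{rem:Gysin} must be reconciled; everything else is a routine consequence of the Borel-construction fibration.
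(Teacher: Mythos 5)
Your overall architecture --- realize $R$ as a $\bZ/2$-twisted form of the Bunke--Schick space $\wR$ and verify the universal property by descending to the double cover --- is the same as the paper's, but there is a genuine gap at the central construction step. Checking that the prescribed action on $\pi_2$ and $\pi_3$ preserves the $k$-invariant (i.e.\ that $(-x_1)\cup(-x_2)=x_1\cup x_2$) only produces a self-equivalence $\tau\co\wR\to\wR$, unique up to homotopy, with $\tau^2\simeq\mathrm{id}$: a \emph{homotopy} involution. The homotopy quotient $\wR\times_{\bZ/2}E\bZ/2$ is not defined for a homotopy involution; it requires a genuine (or at least homotopy-coherent) action, and rectifying a homotopy action of a finite group to a genuine one is a nontrivial problem (Cooke's obstruction theory), with obstructions living in cohomology of $\bZ/2$ with coefficients in the homotopy of the self-equivalence space of $\wR$ --- not in the group you computed. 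So ``there is no obstruction to realizing it on the Postnikov tower'' is precisely the unproved point. The paper circumvents this by never acting on $\wR$ directly: it starts from the strict conjugation action on $\bC\bP^\infty\times\bC\bP^\infty$, forms $B=(\bC\bP^\infty\times\bC\bP^\infty)\times_{\bZ/2}E(\bZ/2)$, and shows via the Borel spectral sequence that the edge homomorphism $H^4(B,\bZ)\to H^4(\bC\bP^\infty\times\bC\bP^\infty,\bZ)$ is split surjective, so the invariant class $c_1c_2$ lifts to a class on $B$; then $R$ is \emph{defined} as the total space of the $K(\bZ,3)$-fibration over $B$ with that $k$-invariant, and the genuine free deck action on its universal cover $\wR$ comes for free. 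Note that invariance of $c_1c_2$ under the involution is necessary but not sufficient for such a lift: whether a fixed class upstairs comes from the Borel quotient is governed by the filtration in the spectral sequence, and that computation is the exact substitute for your missing step.

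A second, smaller gap: even granting a genuine action on $\wR$ with an equivariant universal pair, your identification of $\bZ/2$-equivariant principal pairs on $\wX$ with strictly equivariant maps $\wX\to\wR$ up to equivariant homotopy does not follow formally from the non-equivariant Bunke--Schick classification; from an equivariant pair one first gets only a classifying map $f$ with $f\circ\sigma\simeq\tau\circ f$, and promoting this to a strictly equivariant map (and controlling equivariant homotopy classes) requires obstruction theory over the free cover, i.e.\ the identification of $\operatorname{Map}^{\bZ/2}(\wX,\wR)$ with the section space of $\wX\times_{\bZ/2}\wR\to X$ together with control of its $\pi_0$. The paper sidesteps this as well: it classifies a pair over $X$ directly by its invariants $\xi$, $c_1(p_1)$, $(p_1)_!(h_1)$ (a map $g\co X\to B$), and uses the twisted Gysin sequence twice --- once to get $c_1(p_1)\cup(p_1)_!(h_1)=0$, so that $g$ lifts to $R$, and once to see that the choice of lift records exactly the residual ambiguity of $h_1$ modulo $p_1^*H^3(X,\bZ)$. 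Your sign bookkeeping (both twisted invariants change sign under $j$, while the base component of the flux is fixed) is correct and is the right input, but the two points above must be supplied before your argument is complete; supplying the first one essentially reproduces the paper's construction.
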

\begin{proof}
We begin by constructing the space $R$. Recall that the classifying
space of the orthogonal group $O(2)$ comes with a fibration $BSO(2)
\to BO(2) \to B(\bZ/2)$, and since the quotient group $\bZ/2$ acts
non-trivially on $SO(2)$, this fibration is nontrivial; in fact,  the
space $BO(2)$ is not simple. We can describe it as a Borel construction
$\bC\bP^\infty \times_{\bZ/2} E(\bZ/2)$, where $\bZ/2$ acts on $BSO(2)
= \bC\bP^\infty$ by complex conjugation (this changes the sign of the
generator of $\pi_2$) and freely on $E(\bZ/2)$ as usual.

Next consider a closely related space $B= (\bC\bP^\infty \times
\bC\bP^\infty) \times_{\bZ/2} E(\bZ/2)$, where $\bZ/2$ acts on 
both copies of $\bC\bP^\infty$ by complex conjugation. Note that there
is a fibration $K(\bZ,2) \to B\xrightarrow{\psi}
BO(2)$, where the double cover of $\psi$ is the identity on the second
$\bC\bP^\infty$ factor and the fiber is the first $\bC\bP^\infty$ factor.
The universal cover $\wB$ of
$B$ is homotopy equivalent to $K(\bZ^2, 2)$. Since $B$ is obtained via
the Borel construction, the cohomology ring of $B$ is
$H^*_{\bZ/2}(\bC\bP^\infty \times \bC\bP^\infty, \bZ)$, which can be
computed via the spectral sequence with $E_2^{p,q} = H^p(\bZ/2,
H^q(\bC\bP^\infty \times \bC\bP^\infty, \bZ)) = H^p(\bZ/2, \bZ[c_1,
  c_2])$, where the generators $c_1$ and $c_2$ are each in degree
$q=2$. The generator of $\bZ/2$ changes the signs of $c_1$ and $c_2$,
and is thus given by multiplication by $(-1)^k$ on monomials in $c_1$
and $c_2$ of total degree $q=2k$. Thus $E^{0,4}_2 = H^0(\bZ/2,
\bZ c_1^2 \oplus \bZ c_1c_2 \oplus \bZ c_2^2)\cong \bZ^3$, since the
$\bZ/2$ action  is trivial. From this one can see that the edge
homomorphism $H^4(B, \bZ) \to H^4(\bC\bP^\infty \times \bC\bP^\infty,
\bZ)$ is split surjective. Thus there is a principal $K(\bZ, 3)$
bundle over $B$ 
\[
\xymatrix{K(\bZ,3) \ar[r] & R \ar[d]^\eta\\
& B}
\]
with $k$-invariant corresponding to $c_1c_2\in
H^4(\bC\bP^\infty \times \bC\bP^\infty, \bZ)$, and this defines the
space $R$ up to homotopy equivalence.

Via the composition $R\xrightarrow{\eta} B \xrightarrow{\psi}
BO(2)$, we have a map $R\to BO(2)$ and
thus a principal $O(2)$ bundle over $R$ and hence a (non-principal)
circle bundle $E\xrightarrow{p} R$. Since $\pi_1(R)\cong \bZ/2$ and
the composite $R\xrightarrow{\psi\circ\eta} BO(2) \to B(\bZ/2)$
induces an isomorphism on $\pi_1$, the 
Stiefel-Whitney class of $p$ is the generator of $H^1(R,\bF)\cong \bF$
and in particular is non-trivial. We have a commuting diagram
\[
\xymatrix{\wE \ar[r]^{\wc} \ar[d]^{\wt p} & E\ar[d]^p\\
\wR \ar[r]^{c} & \, R ,}
\]
where it is clear that $\wR$, the double cover and also the universal
cover of $R$, is the classifying space of Bunke and Schick discussed in
the statement of the theorem. Here $\wE$ is the circle bundle over
$\wR$ identified by Bunke and Schick, so it has the homotopy type $\wE
\simeq K(\bZ,2)\times  K(\bZ,3)$.  Note that $\bZ/2\cong \pi_1(E)$ acts
trivially on $H^0$ and $H^3$, but nontrivially on $H^2$. Thus one can
see that $E\simeq BO(2)\times K(\bZ,3)$. Projection onto the second
factor $K(\bZ,3)$ thus defines a canonical class $h\in H^3(E,
\bZ)$. So we have a canonical pair $(E\xrightarrow{p}R, h)$ over $R$,
and for any space $X$ and map $f\co X\to R$, we get an induced diagram
\[
\xymatrix{f^* E \ar[r]^{\wt f} \ar[d]^{f^*p} & E\ar[d]^p\\
X \ar[r]^{f} & R }
\] 
and a pair $(f^*E, \wt f^*(h))$ over $X$, which up to isomorphism only
depends on the homotopy class of $f$.  We need to prove universality.

First of all, the universal cover $\wR$ is what it has to be.  For if
a space $X$ is simply connected, then any circle bundle over $X$ is
automatically principal, since $w_1$ must vanish, and a pair in the
sense of Definition 
\ref{def:pair} is the same as a pair in the sense of \cite[Definition
2.1]{MR2130624}. Thus since our $\wR$ is the classifying space of
Bunke-Schick, we have (again assuming $X$ simply connected) 
\[
\cP(X) =_{\text{Bunke-Schick}} [X, \wR] \cong [X, R],
\]
since any map $X\to R$ has an essentially unique lifting to
$\wR$. (Really there are two lifts, since a basepoint in the image of
$X$ has two inverse images in $\wR$, but the two lifts are freely homotopic.)

Now suppose we have a pair $(E_1 \xrightarrow{p_1} X, h_1)$ over
$X$. Then we have the Stiefel-Whitney class $\xi=w_1(p_1)$, as well as two
classes in $H^2(X, \bZ_\xi)$: $c_1(p_1)$ and $(p_1)_!(h_1)$.  These
three classes correspond precisely to a homotopy class of maps
$g\co X \to B$.  (Recall that $B$ is a $\bZ/2$-twisted version of
$K(\bZ,2) \times K(\bZ,2)$.) However, the Gysin sequence (Remark
\ref{rem:Gysin}) gives us two additional pieces of information. First
of all, the cup product of $c_1(p_1)$ and $(p_1)_!(h_1)$ vanishes in
$H^4(X, \bZ)$. That says that $g$ factors through the homotopy fiber
of a map $k\co B\to K(\bZ, 4)$ associated to the product of the two
twisted canonical classes. Secondly, the exact sequence says that if
$(p_1)_!(h_1)$  is known, then $h_1$ is determined modulo
$(p_1)^*(H^3(X, \bZ))$. These pieces of information say exactly that
$g$ has a lifting to a map $f\co X\to R$, and that the homotopy class
of $f$ completely determines the pair $(E_1 \xrightarrow{p_1} X, h_1)$.
\end{proof}
\begin{lemma}
\label{lem:Tdualityinvol}
The classifying space $R$ of Theorem \ref{thm:classspace} has a unique
homotopy involution interchanging the two $K(\bZ,2)$'s and preserving
the homotopy class of the map $w_1\co R\to K(\bZ/2, 1)$ given as the
composite 
\[
R \xrightarrow{\eta} B \xrightarrow{\psi} BO(2) \xrightarrow{Bq} B\bZ/2.
\]
\end{lemma}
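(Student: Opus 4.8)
\emph{Strategy.} The plan is to produce the involution from the manifest symmetry of the construction of $R$, to pin down its effect on all homotopy groups, and then to extract uniqueness by passing to the universal (hence double) cover $\wR$, where the corresponding statement is clean.

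\emph{Existence.} I would write $B$ as the fibrewise product $BO(2)\times_{B\bZ/2}BO(2)$ --- equivalently $(\bC\bP^\infty\times\bC\bP^\infty)\times_{\bZ/2}E(\bZ/2)$ --- and let $\sigma\co B\to B$ be the involution exchanging the two factors. Because $\sigma$ leaves the $E(\bZ/2)$-coordinate untouched it covers the identity of $B\bZ/2$, so it preserves $w_1$; and it exchanges the two twisted Euler classes $e_1,e_2\in H^2(B,\bZ_\xi)$, hence exchanges the two $K(\bZ,2)$-directions in $\pi_2$. The $k$-invariant of $R\xrightarrow{\eta}B$ is the product $e_1e_2\in H^4(B,\bZ)$ (the class restricting to $c_1c_2$ on the double cover), and since $e_1,e_2$ are even-dimensional one has $\sigma^*(e_1e_2)=e_2e_1=e_1e_2$. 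Thus $\sigma\circ\eta$ pulls the $k$-invariant back to $\eta^*(e_1e_2)=0$, so it lifts through $\eta$ to a self-map $\phi\co R\to R$. Modelling $\sigma$ by the honest swap makes it a strict involution, so $\phi$ can be chosen with $\phi^2\simeq\mathrm{id}_R$. A useful rigidity point is that the fibre degree of $\phi$ on $K(\bZ,3)$ is forced to be $+1$: degree $\alpha$ would require $\alpha\cdot(e_1e_2)=e_1e_2$, impossible for $\alpha\ne1$ since $e_1e_2$ has infinite order.

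\emph{Uniqueness via the universal cover.} Let $\phi'$ be any homotopy self-equivalence of $R$ interchanging the two $K(\bZ,2)$'s and fixing $w_1$. It induces the identity on $\pi_1$, the coordinate swap on $\pi_2$, and (by the same infinite-order argument) $+1$ on $\pi_3$. Lifting to $\wR$ yields $\wt{\phi'}\co\wR\to\wR$ with the same effect on homotopy groups. Here the situation is clean: $\wR$ is simply connected with $k$-invariant $c_1c_2$, so the swap-involutions form a torsor over $H^3(\wR,\bZ)=H^3(K(\bZ^2,2),\bZ)=0$; there is thus a unique one, namely the Bunke--Schick T-duality map of \cite{MR2130624}. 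Hence $\wt{\phi'}\simeq\wt\phi$, and it remains to descend this equality to $R$. Since $\gamma$ (the deck transformation generating $\pi_1(R)=\bZ/2$) acts by $-1$ on $\pi_2(\wR)$ and $+1$ on $\pi_3(\wR)$, the conjugate $\gamma\,\wt\phi\,\gamma^{-1}$ is again a swap-involution, hence $\simeq\wt\phi$; so $\wt\phi$ is $\gamma$-equivariant up to homotopy.

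\emph{The main obstacle: descent.} A homotopy involution of $R$ is the same datum as a $\gamma$-equivariant homotopy involution of $\wR$, so the remaining work --- which I expect to be the real difficulty --- is to promote the homotopy-commutation $\gamma\,\wt\phi\,\gamma^{-1}\simeq\wt\phi$ to an honest equivariant map and to check that the resulting descent is unique. Non-equivariantly the lifts of $\sigma$ to $R$ already form a torsor over $H^3(R,\bZ)\cong H^3(B,\bZ)\cong(\bZ/2)^2$, spanned by the pulled-back integral classes $\eta_1^*W_3,\eta_2^*W_3$ with $W_3=\beta w_2$, on which $\sigma^*$ acts by the swap; one must show that imposing $w_1$-invariance together with the involution identity $\phi^2\simeq\mathrm{id}_R$ --- equivalently $\gamma$-equivariance --- collapses this ambiguity to a single homotopy class. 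I would handle this by running the entire argument $\bZ/2$-equivariantly on $\wR$ from the outset, so that the residual $H^3$-indeterminacy is organised by the operators $1\pm\sigma^*$ and is annihilated by the quadratic involution condition.
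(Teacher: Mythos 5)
Your existence step coincides with the paper's entire proof: the paper's one-line argument is precisely that the $k$-invariant of $R\xrightarrow{\eta}B$ is symmetric under the swap $\sigma$ of the two factors, so $\sigma$ lifts through $\eta$. Your reduction of uniqueness on the cover to $H^3(\wR,\bZ)=0$ is also correct. The genuine gap is in the step you yourself flag as ``the real difficulty,'' and the mechanism you propose there does not work. First, the set of \emph{homotopy} classes of self-maps of $R$ covering $\sigma$ is not the torsor over $H^3(R,\bZ)\cong(\bZ/2)^2$: that torsor classifies lifts only up to \emph{vertical} homotopy, and distinct vertical classes can become homotopic through homotopies whose projection to $B$ is a nontrivial loop at $\sigma\circ\eta$ in $\operatorname{Map}(R,B)$. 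Second, the ``quadratic involution condition'' cannot collapse the ambiguity even at the vertical level: if $\phi_0$ is a lift compatible with the principal $K(\bZ,3)$-action and $\phi_a=\phi_0\cdot a$ is its twist by $a\in H^3(R,\bZ)$, one computes $\phi_a^2\simeq_{\mathrm{vert}}\mathrm{id}_R\cdot\bigl(a+\phi_0^*a\bigr)$, where $\phi_0^*$ acts on $H^3(R,\bZ)\cong(\bZ/2)^2$ by interchanging your two generators $\eta_1^*W_3,\eta_2^*W_3$. So the involution condition forces $a\in\ker(1+\phi_0^*)$, which in a $\bZ/2$-vector space is the \emph{fixed} subgroup $\{0,\;\eta_1^*W_3+\eta_2^*W_3\}\cong\bZ/2$ of the swap, not $\{0\}$. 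Your plan therefore stalls with two candidate classes, and no amount of $1\pm\sigma^*$ bookkeeping removes the diagonal one. (A smaller point: even your claim that $\phi$ ``can be chosen with $\phi^2\simeq\mathrm{id}_R$'' needs $k\circ\sigma=k$ strictly, not merely up to homotopy, to run the pullback construction.)

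What actually closes the argument --- and is absent from your sketch (and hidden behind the paper's ``immediate'') --- is a computation of the image of the connecting map $\partial\co\pi_1\bigl(\operatorname{Map}(R,B),\sigma\circ\eta\bigr)\to H^3(R,\bZ)$, which is exactly the indeterminacy killed by non-vertical homotopies. A self-homotopy of $\sigma\circ\eta$ through the $\pi_2$-directions of $B$ is specified by a class $(x_1,x_2)\in H^1(R,(\bZ^2)_\xi)$, and its difference class is $e_1x_1+e_2x_2\in H^3(R,\bZ)$, where $e_i\in H^2(R,\bZ_\xi)$ are the two twisted Euler classes. Taking for $x_i$ the generator $\tau$ of $H^1(R,\bZ_\xi)\cong\bZ/2$ gives $e_i\tau=\eta_i^*W_3$, i.e.\ the two generators of $H^3(R,\bZ)$; hence $\operatorname{im}\partial=H^3(R,\bZ)$ and \emph{all} lifts of $\sigma$ are homotopic as maps $R\to R$. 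This yields uniqueness outright (and makes each lift automatically a homotopy involution, since its square covers $\mathrm{id}_B$ and the same computation shows every self-map covering $\mathrm{id}_B$ is homotopic to $\mathrm{id}_R$), but by a mechanism --- non-vertical homotopies generated by $H^1(R,(\bZ^2)_\xi)$ --- that your proposal does not contain.
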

\begin{proof}
This is immediate from the explicit description of $R$; the key point
is that the map $k\co B \to K(\bZ, 4)$ is symmetric under the
interchange of the $K(\bZ,2)$ factors.
\end{proof}
\begin{corollary}
Baraglia's Theorem \ref{thm:Baraglia1} holds; that is, every pair has
a unique T-dual pair satisfying the axioms of Theorem \ref{thm:Baraglia1}.
\end{corollary}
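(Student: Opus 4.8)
The plan is to define the T-dual operation on pairs directly via the homotopy involution $\sigma\co R\to R$ furnished by Lemma \ref{lem:Tdualityinvol}, and then to check that it satisfies the five axioms of Theorem \ref{thm:Baraglia1}. Given a pair $(E\xrightarrow{p}X,h)$, Theorem \ref{thm:classspace} provides a classifying map $f\co X\to R$, unique up to homotopy; I would declare the T-dual of the pair to be the one classified by $\sigma\circ f$. Because $\sigma$ is an involution and composition with $\sigma$ commutes with precomposition by any map $g\co Y\to X$, axioms (1) and (2) are then immediate: the classifying map of a pull-back is $f\circ g$, so $\sigma\circ(f\circ g)=(\sigma\circ f)\circ g$ gives naturality, and $\sigma\circ\sigma\simeq\mathrm{id}_R$ gives involutivity. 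Axiom (3) is built into Lemma \ref{lem:Tdualityinvol}, which demands that $\sigma$ fix the homotopy class of $w_1\co R\to K(\bZ/2,1)$; hence $w_1(p^\#)=w_1(p)=\xi$ and the double cover $\wX\xrightarrow{c}X$ is unchanged.

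Axiom (4) is read off from the effect of $\sigma$ on the composite $R\xrightarrow{\eta}B$. As established in the proof of Theorem \ref{thm:classspace}, a map $X\to B$ encodes exactly the triple $\bigl(\xi,\,c_1(p),\,p_!(h)\bigr)$, the two twisted $K(\bZ,2)$ factors of $B$ carrying $c_1(p)$ and $p_!(h)$ respectively. Since $\sigma$ interchanges these two factors while fixing $w_1$, it swaps $c_1(p)\leftrightarrow p_!(h)$; thus $c_1(p^\#)=p_!(h)$, and applying the same analysis to the dual pair gives $(p^\#)_!(h^\#)=c_1(p)$, which is precisely axiom (4).

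The genuine work lies in axiom (5) and in the sharpest form of uniqueness, and I expect this to be the main obstacle. Axioms (3) and (4) already determine the invariants $\xi$ and $c_1(p^\#)=p_!(h)$, hence the bundle $p^\#$ itself, and they fix $(p^\#)_!(h^\#)=c_1(p)$; but by the Gysin sequence (Remark \ref{rem:Gysin}) this pins down $h^\#$ only modulo $(p^\#)^*H^3(X,\bZ)$. Axiom (5), the agreement of $h$ and $h^\#$ after pull-back to the Poincar\'e bundle $E\times_X E^\#$, must remove this last ambiguity. My plan is to verify it once on the universal example $(E\xrightarrow{p}R,h)$ over $R$, whence it follows for all pairs by naturality, and to do this verification by reduction to the principal case already treated by Bunke and Schick. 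Pulling back along the double cover $c\co\wR\to R$ makes both $p$ and $p^\#$ \emph{principal} circle bundles over the simply-connected $\wR$, which is literally the Bunke--Schick classifying space, and $\sigma$ restricts to their involution; over $\wR$ the Poincar\'e-bundle identity is exactly their result in \cite{MR2130624}. It then remains to descend this identity from $\wE\times_{\wR}\wE^\#$ to $E\times_R E^\#$, the delicate step, which I would handle using the transfer sequence for the double cover $c$ together with the $\bZ/2$-equivariance supplied by $\sigma$ to show that the downstairs class is detected by its pull-back upstairs.

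Finally, uniqueness follows formally: axioms (3)--(5) force any candidate T-dual pair to have invariants $\xi$, $c_1(p^\#)=p_!(h)$, $(p^\#)_!(h^\#)=c_1(p)$ and the prescribed restriction to the Poincar\'e bundle, and these data determine a homotopy class of classifying map $X\to R$; since Lemma \ref{lem:Tdualityinvol} asserts that the homotopy involution with the stated properties is \emph{unique}, this class must be $\sigma\circ f$, so the T-dual constructed above is the only one satisfying the axioms.
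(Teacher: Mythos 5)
Your construction and your handling of axioms (1)--(4) coincide with the paper's proof: the dual pair is defined by composing the classifying map with the involution of Lemma \ref{lem:Tdualityinvol}, naturality and involutivity are formal, (3) comes from preservation of $w_1$, and (4) from the interchange of the two twisted $K(\bZ,2)$ factors. The paper also verifies axiom (5) only on the universal example, as you propose. The divergence --- and the gap --- is in how you verify it there. You reduce to Bunke--Schick over the double cover $\wR$ and then want to descend from $\wE\times_{\wR}\wE^\#$ to $E\times_R E^\#$ by a transfer argument showing the class downstairs is ``detected by its pull-back upstairs.'' That detection statement is not automatic and is exactly what needs proof. For a double cover $c$ the transfer satisfies $c_!\circ c^*=2$, so from the vanishing of $c^*\bigl(h|_{E\times_R E^\#}-h^\#|_{E\times_R E^\#}\bigr)$ you can conclude only that the difference class is killed by $2$; equivalently, $\ker c^*$ on $H^3(\cdot,\bZ)$ is the image of the connecting map from $H^2(\cdot,\bZ_\xi)$ in the long exact sequence of the cover, and this image has no reason to vanish for a general space. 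The $\bZ/2$-equivariance does not help either: at best it shows the difference class is anti-invariant, and a $2$-torsion class is fixed by $-1$, so equivariance adds nothing beyond the transfer.

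What closes the gap is a computation of the correspondence space itself, which is precisely what the paper does instead: it identifies $E\times_R E^\#\simeq K(\bZ,3)\times K(\bZ/2,1)$ and checks that both canonical classes restrict to the canonical class of the $K(\bZ,3)$ factor. In particular $H^3(E\times_R E^\#,\bZ)\cong\bZ$ is torsion-free (and the relevant twisted group $H^2(E\times_R E^\#,\bZ_\xi)$ vanishes), so \emph{after} this identification your transfer argument does finish --- but the identification is the actual mathematical content, not a technicality that can be deferred. A smaller point about your final paragraph: uniqueness of the homotopy involution on $R$ does not by itself give uniqueness of the T-dual pair, since a candidate dual is not a priori of the form $\sigma\circ f$ for any involution of $R$; what does the work is your earlier observation that axioms (3)--(5) determine the invariants and hence, via Theorem \ref{thm:classspace}, the classifying map --- and the step ``axiom (5) removes the ambiguity in $h^\#$ modulo $(p^\#)^*H^3(X,\bZ)$'' again requires an injectivity statement on the Poincar\'e bundle of the same kind as above, not just the existence clause of Lemma \ref{lem:Tdualityinvol}.
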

\begin{proof}
The T-dual is gotten by composing the classifying map into $R$ with
the involution of Lemma \ref{lem:Tdualityinvol}. Property (1),
naturality, and property (2), involutiveness,
are immediate.  Property (4) follows from the fact that the
involution switches the two copies of $K(\bZ,2)$, which from the proof of
Theorem \ref{thm:classspace} correspond to $c_1(p)$ and $p_!(h)$,
respectively. Property (3) follows from the fact that the involution
of Lemma \ref{lem:Tdualityinvol} preserves $w_1$.  It remains just to
check the last property, (5). This follows just from checking the
universal example $R$. If $E^\# \xrightarrow{p^\#} R$ is the circle
bundle over $R$ T-dual to the original one (under the involution of
Lemma \ref{lem:Tdualityinvol}), then it is easy to see that
$E\times_R E^\# $ is just $K(\bZ, 3) \times K(\bZ/2, 1)$, and the
canonical classes $h\in H^3(E,\bZ)$ and $h^\# \in H^3(E^\#,\bZ)$ both
pull back to the canonical class in $H^3(K(\bZ, 3), \bZ)$.  Since every
T-duality is pulled back from the universal one, the theorem follows.
\end{proof}

\section{An Approach via Noncommutative Geometry}
\label{sec:NCG}

Now we aim to reproduce topological T-duality for non-principal circle
bundles using the basic idea from \cite{MR2116734}, which depends on
{\Ca} crossed products.  For this section, we will assume $X$ is a
smooth manifold of finite homotopy type and $E\xrightarrow{p} X$ is a circle
bundle, not necessarily principal.  These conditions are much stronger
than they need to be\footnote{$X$ need not be a manifold and doesn't
  really have to have finite homotopy type.}, 
but they are satisfied in physics applications,
and they make the proof of Theorem \ref{thm:Glift} much easier.
Let $\wE\to X$ be the principal $O(2)$-bundle associated to $p$. As in
Definition 
\ref{def:circlebundle}, define the double coverings $c$ of $X$ and
$\wc$ of $E$ associated to $\xi = w_1(p)$. Given $h\in H^3(E,\bZ)$, pull it
back to $(\wc)^*(h)\in  H^3(\wE,\bZ)$, and form the associated
stable continuous-trace algebra $A = CT(\wE, (\wc)^*(h))$. We have a
free action of $O(2)$ on $\wE$, but it may not lift to an action of
$O(2)$ on $A$. We therefore consider the universal cover $G = \wO(2) =
\bR\rtimes H$ of $O(2)$, where $H\cong \bZ/2$ lifts the subgroup $\{1,
j\}$ of $O(2)$. Here, as before, $j$ is the reflection matrix $\begin{pmatrix}
  1&0\\0&-1 \end{pmatrix}$. Note that $G$ is the affine group of the
line. In this context, we have the following.
\begin{theorem}
\label{thm:Glift}
In this situation of a principal $O(2)$-bundle $\wE\to X$ and a class
in $H^3(\wE, \bZ)$ pulled back from the non-principal circle bundle
$E$ over $X$, the free action of $O(2)$ on $\wE$ lifts to an action of
$G = \wO(2)$ on the stable continuous trace algebra $A = CT(\wE, (\wt
c)^*(h))$.   Furthermore, the action of $G$ on $A$ is uniquely
determined up to exterior equivalence once $h\in H^3(E,\bZ)$ is fixed.
\end{theorem}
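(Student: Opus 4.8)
The plan is to verify the invariance of the Dixmier--Douady class that is necessary for any lift to exist, then construct the lift in two stages---first over the connected subgroup $SO(2)$, then over the reflection $j$---and finally to pin down uniqueness.

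First I would record the decisive feature of the class $\delta=\wc^{*}h\in H^3(\wE,\bZ)$, namely that it is $O(2)$-invariant. Invariance under $SO(2)$ is automatic, since $SO(2)$ is connected and hence acts trivially on $H^3(\wE,\bZ)$; invariance under $j$ holds because $\delta$ is pulled back along the quotient map $\wc\co\wE\to E=\wE/\langle j\rangle$, so $j^{*}\delta=(\wc\circ j)^{*}h=\wc^{*}h=\delta$. This is precisely the necessary condition for a lift, and the pullback structure gives more: since $A=CT(\wE,\delta)$ is the pullback $\wc^{*}CT(E,h)$ along the free double cover $\wc$, the deck transformation of that cover supplies a \emph{canonical} automorphism $\beta$ of $A$ lying over $j$, with $\beta^{2}=\operatorname{id}$.

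Next I would lift the connected part. Restricted to $SO(2)$, the bundle $\wE\to\wX$ is a genuine principal circle bundle, so the results on principal torus bundles from \cite{MR2116734,MR2222224} apply and produce an action $\alpha$ of the universal cover $\bR$ of $SO(2)$ on $A$, unique up to exterior equivalence. The point at which the standing assumption that $X$ is a smooth manifold of finite homotopy type does real work is in making $\alpha$ compatible with $\beta$: because the entire configuration is pulled back from the free $\bZ/2$-quotient, I can take all the geometric data entering the construction of $\alpha$ (a $\bZ/2$-invariant connection on $\wE\to\wX$, an equivariant partition of unity, equivariant local trivializations and an equivariant cocycle for $\delta$) to be $\langle j\rangle$-equivariant, by averaging over $\langle j\rangle$. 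Since $j$ conjugates rotation by $t$ to rotation by $-t$ on $\wE$, the resulting $\bR$-lift then satisfies the semidirect-product relation $\beta\,\alpha_t\,\beta^{-1}=\alpha_{-t}$ exactly; together with $\beta^{2}=\operatorname{id}$ this assembles $(\alpha,\beta)$ into a genuine action of $G=\bR\rtimes\{1,j\}$ on $A$ covering the $O(2)$-action on $\wE$, giving existence.

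For uniqueness up to exterior equivalence I would take two such $G$-lifts, compare their restrictions to $\bR$ (exterior equivalent by the principal case), and then show that the intertwining cocycle $w\co\bR\to U(M(A))$ can be chosen compatibly with the two copies of $\beta$, so that it extends to an exterior equivalence of the full $G$-actions. I expect the reflection direction to be the main obstacle throughout: unlike the contractible group $\bR$, for which the principal case already guarantees both a connected lift and its essential uniqueness, the group $\bZ/2\cong\langle j\rangle$ has nontrivial cohomology, so the obstructions---to straightening $\beta\alpha_t\beta^{-1}$ to $\alpha_{-t}$, and to choosing $w$ equivariantly---are classes in low-degree group cohomology of $\bZ/2$ that need not vanish in general. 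What makes them vanish here is exactly that $\delta$ is pulled back from $E$, which produces the canonical involution $\beta$ with $\beta^{2}=\operatorname{id}$, combined with the smooth-manifold hypothesis, which allows the $\langle j\rangle$-equivariant choices that kill these obstructions by direct construction rather than through the full equivariant Brauer group machinery.
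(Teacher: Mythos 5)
Your existence argument is essentially the paper's own ``simplest argument.'' The paper also proceeds geometrically: choose a smooth principal $PU$-bundle $P\to E$ with Dixmier--Douady class $h$ and a connection on it, pull both back along $\wc$ (pulled-back data is automatically $j$-invariant, so no averaging is even needed), lift the vector field $V$ generating the $\bR$-action on $\wE$ horizontally to $\wV$ on $\wP$, and observe that uniqueness of horizontal lifts forces $j_*\wV=-\wV$; the flow of $\wV$ together with the deck involution is then the desired $G$-action. Two corrections to your version: the connection that does the work lives on the $PU$-bundle over $\wE$ (equivalently, is pulled back from $P\to E$), not on $\wE\to\wX$; and the $\bR$-lifts in the principal-bundle papers you invoke are produced by obstruction theory in Moore cohomology, not by a geometric construction, so ``making the geometric data entering the construction of $\alpha$ equivariant'' is not literally available --- one has to argue directly at the level of the $PU$-bundle, as the paper does.

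The genuine gap is in uniqueness. You reduce to correcting an exterior-equivalence cocycle $w\co\bR\to U(M(A))$ between the two $\bR$-restrictions so that it also intertwines the involutions, you correctly note that the obstructions to doing this are low-degree cohomology classes for $\bZ/2$ that ``need not vanish in general,'' and then you assert that they vanish here ``by direct construction'' because $\delta$ is pulled back and $X$ is smooth. No such construction is given, and none is evident: averaging is unavailable for $\bT$- or unitary-valued cochains, and the relevant Moore-cohomology groups of $H=\bZ/2$ (with coefficients in $C(\wE,\bT)$ and $H^2(\wE,\bZ)$) have no general vanishing. Indeed, the paper never shows these obstruction groups vanish; it shows instead that the $G$-problem is \emph{equivalent} to the $H$-problem, where uniqueness is already known because the $H$-lift is the pull-back action, via $\Br_H(\wE)\cong\Br(E)$ from \cite{MR1446378}. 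Concretely, Lemma \ref{lem:cohomCT} proves that restriction $H^q(G,C(\wE,\bT))\to H^q(H,C(\wE,\bT))$ is an isomorphism for $q=2,3$ --- this rests on the Moore-cohomology vanishing results of Lemmas \ref{lem:cohomfinite} and \ref{lem:cohomR} and on the finite-homotopy-type hypothesis (smoothness plays no role in this step, contrary to your attribution) --- and this is fed into the two Crocker--Kumjian--Raeburn--Williams exact sequences in diagram \eqref{eq:relBr}, where the Five Lemma yields that $\ker F_G\to\ker F_H$ is an isomorphism; together with $\Br_H(\wE)\cong\Br(E)$ this pins down the $G$-action up to exterior equivalence. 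Some comparison of $G$-cohomology with $H$-cohomology of exactly this kind is indispensable; without it, your cocycle-correction plan cannot be completed as stated.
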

The proof uses the obstruction theory developed in \cite{MR1446378}, which
gives necessary and sufficient conditions for such a lifting.  The key
point is that we can compare the obstructions for $G$ with the similar
obstructions for a $\bZ/2$-action lifting the free action of $\bZ/2$
on $\wE$ with quotient space $E$.  Since the obstruction theory is
based on the group cohomology of {C.} {C.} Moore \cite{MR0414775},
further developed by {D.} Wigner \cite{MR0338132}, we start with three
lemmas about these cohomology groups.
\begin{lemma}
\label{lem:cohomfinite}
Let $H$ be a finite group and let $A$ be an $H$-module which is a
vector space over a field $k$ of characteristic zero with a linear
$H$-action.  Then $H^q(H, A)=0$ for $q>0$.
\end{lemma}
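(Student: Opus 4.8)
The plan is to play two facts about group cohomology against each other: the order $|H|$ always annihilates $H^q(H,A)$ for $q>0$, while $|H|$ acts invertibly on any vector space over a field of characteristic zero. Before doing this I would note that for the finite (hence discrete) group $H$, the Moore--Wigner cohomology appearing in this paper coincides with ordinary group cohomology: every cochain $H^q\to A$ is automatically Borel when $H$ is discrete, so I may freely work with the inhomogeneous bar complex of ordinary group cochains.

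First I would recall the standard transfer argument. Writing $\operatorname{res}\co H^q(H,A)\to H^q(\{1\},A)$ for restriction to the trivial subgroup and $\operatorname{cor}$ for the associated corestriction (transfer), the composite $\operatorname{cor}\circ\operatorname{res}$ is multiplication by the index $[H:1]=|H|$. Since $H^q(\{1\},A)=0$ for $q>0$, this composite is the zero map, so multiplication by $|H|$ kills $H^q(H,A)$ for all $q>0$. Next, because $A$ is a $k$-vector space with $k$ of characteristic zero, the integer $|H|$ is a nonzero scalar in $k$ and hence invertible; as the $H$-action is $k$-linear, multiplication by $|H|$ is an automorphism of $A$ as an $H$-module and therefore induces an automorphism of $H^q(H,A)$. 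A map that is simultaneously zero and an automorphism forces $H^q(H,A)=0$ for $q>0$.

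To keep the argument fully self-contained one can bypass the transfer theorem and exhibit an explicit averaging contracting homotopy, which uses characteristic zero directly. Given a $q$-cocycle $f$, I would set
\[
g(h_1,\dots,h_{q-1})=\frac{(-1)^q}{|H|}\sum_{h\in H} f(h_1,\dots,h_{q-1},h),
\]
and verify $dg=f$ by summing the cocycle identity $df=0$ over the last variable, using the substitution $h\mapsto h_q h$ to recognize one of the resulting terms as $|H|\,g(h_1,\dots,h_{q-1})$ and dividing by the invertible scalar $|H|$. This shows every cocycle is a coboundary in each positive degree. The only point requiring any care is this last verification: it is a routine but sign-sensitive bar-complex computation, and it is the sole (and quite mild) obstacle, since it simultaneously encodes the transfer and the division by $|H|$.
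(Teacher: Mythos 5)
Your proof is correct, but it takes a genuinely different route from the paper's. The paper simply invokes Maschke's theorem: since $k$ has characteristic zero (or merely characteristic prime to $|H|$), the group ring $kH$ is semisimple, so the invariants functor $A \mapsto A^H$ is exact and has no higher derived functors, and the cohomology vanishes for purely conceptual reasons. You instead run the restriction--corestriction (transfer) argument --- $\operatorname{cor}\circ\operatorname{res}$ is multiplication by $|H|$, which is zero on $H^q(H,A)$ for $q>0$ yet acts invertibly because $|H|$ is a nonzero scalar in $k$ --- and you back it up with an explicit averaging contracting homotopy on the bar complex (your formula is the standard one; the $q=1$ case checks out directly, and the general case is the routine sign-chase you flag). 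Both arguments use characteristic zero in the same essential place, namely invertibility of $|H|$ on $A$. What your version buys: it is more self-contained, needing nothing beyond the cochain complex, and it isolates the minimal hypothesis --- that $A$ be a $\bZ[1/|H|]$-module --- which is weaker than being a $k$-vector space. Your preliminary remark that Moore--Wigner cohomology of a finite \emph{discrete} group coincides with ordinary group cohomology (every cochain is automatically Borel) is also a worthwhile point of care: the lemma is applied in the paper inside Moore's theory, and the paper's derived-functor formulation tacitly relies on the same identification that you make explicit. What the paper's version buys is brevity and a conceptual one-line statement that avoids any cochain computation.
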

\begin{proof}
This is a standard fact in homological algebra.  The point is that the
group ring $kH$ is semisimple (by Maschke's Theorem), because of the
fact that $k$ has characteristic $0$ (in fact it would suffice for the
characteristic to be relatively prime to the order of $H$), so the functor
$A\mapsto A^H$ is exact, and has no higher derived functors.
\end{proof}
\begin{lemma}
\label{lem:cohomR}
Let $H=\bR$ be the real line viewed as a topological group. 
\begin{enumerate} 
\item Let $A$ be a countable discrete $\bR$-module. Then $H^q(H, A)=0$
for $q>0$ {\lp}and is just $A$ itself for $q=0${\rp}.
\item Let $A$ be an $\bR$-module which is a complete metrizable
topological vector space.
Then $H^q(H, A)$ {\lp}for Moore's cohomology theory{\rp} is a real vector
space which vanishes for $q>1$.
\end{enumerate}
\end{lemma}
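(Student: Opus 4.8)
The plan is to handle the two parts separately, in each case reducing Moore's Borel-cochain cohomology to something computable by a comparison theorem. Throughout, recall that $H^*(\bR, A)$ is the cohomology of the complex of Borel cochains $\bR^q \to A$ equipped with the standard inhomogeneous differential.

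For part (1) I would first observe that the module structure is necessarily trivial: since $\bR$ is connected and $A$ is discrete, for each $a \in A$ the orbit map $t \mapsto t \cdot a$ is a continuous map from the connected space $\bR$ to the discrete space $A$, hence constant, so that $t \cdot a = a$ for all $t$. With the action trivial, I would invoke Wigner's comparison theorem \cite{MR0338132} (see also \cite{MR0414775}): for a Lie group $G$ and a countable discrete coefficient module, Moore's measurable cohomology $H^*(G, A)$ is canonically isomorphic to the singular cohomology $H^*(BG; A)$ of the classifying space. Since $\bR$ is contractible one may take $E\bR = \bR$ with the translation action, whence $B\bR$ is a point; therefore $H^q(B\bR; A) = 0$ for $q > 0$ and $=A$ for $q=0$, which is exactly the assertion.

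For part (2) the vector-space structure is the formal point: scalar multiplication on the topological vector space $A$ induces a scalar multiplication on Borel cochains that commutes with the differential, so each $H^q(\bR, A)$ is a real vector space. The real content is the vanishing for $q > 1$, which I would establish in two stages. First, for a complete metrizable (Fr\'echet) $\bR$-module, Moore's Borel-cochain cohomology coincides with the continuous cohomology $H^*_{\mathrm{cont}}(\bR, A)$; this is the comparison theorem of Moore \cite{MR0414775}. Second, the continuous cohomology of the one-dimensional Lie group $\bR$ vanishes above degree $\dim \bR = 1$: this is the standard dimension bound, most cleanly seen via the van Est isomorphism identifying $H^*_{\mathrm{cont}}(\bR, A)$ with the Lie-algebra cohomology $H^*(\mathfrak{g}; A^\infty)$ of the abelian Lie algebra $\mathfrak{g} = \bR$, computed from the complex $\Lambda^* \mathfrak{g}^* \otimes A^\infty$, which is concentrated in degrees $0$ and $1$.

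The step I expect to be the main obstacle is the Borel-versus-continuous comparison in part (2): for a general complete metrizable topological vector space --- as opposed to a Banach or a smooth module --- one must check that Moore's measurable theory genuinely collapses onto continuous cochains, and I would likely arrange this by first passing to the dense subspace of smooth vectors, whose continuous cohomology agrees with that of $A$, before applying the dimension bound. Once this comparison is secured, the van Est count finishes part (2), and part (1) is then entirely formal given the triviality of the action and the contractibility of $\bR$.
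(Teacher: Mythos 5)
Your proof is correct and is essentially the paper's own argument: the paper's proof of this lemma consists of citing Corollaries 4.3 and 4.7 of \cite{MR2116734}, and those corollaries are established exactly by your route --- triviality of the action plus Wigner's identification of Moore cohomology with $H^*(B\bR; A)$ for countable discrete coefficients, and the reduction to continuous cohomology followed by the van Est/Lie-algebra dimension bound for complete metrizable vector space coefficients. One small correction of attribution: the Borel-versus-continuous comparison theorem (the step you flag as the main obstacle) is due to Wigner \cite{MR0338132}, not Moore \cite{MR0414775}, and is quoted as such in \cite{MR2116734}, so it can simply be cited rather than re-proved.
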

\begin{proof}
For the first fact, see \cite[Corollary 4.3]{MR2116734},
which is based on \cite{MR0338132}. For the second, see
\cite[Corollary 4.7]{MR2116734}. 
\end{proof}
\begin{lemma}
\label{lem:cohomCT}
Let $G = \bR\rtimes H$, $H=\{1,j\}\cong \bZ/2$ acting on $\bR$ by
$j\cdot t = -t$, $t\in \bR$, and let $X$ be a locally compact second
countable $G$-space with finite homotopy type.  
Give $C(X, \bT)$ the $G$ action coming from the
action of $G$ on $X$.  Then the maps $H^q(G, C(X,\bT))\to H^q(H,
C(X,\bT))$ induced by the inclusion $H\hookrightarrow G$ are
isomorphisms for $q=2,\,3$.
\end{lemma}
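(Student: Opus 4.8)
The plan is to use the Lyndon--Hochschild--Serre spectral sequence for the group extension $1\to\bR\to G\to H\to 1$,
\[
E_2^{p,q}=H^p\bigl(H,\,H^q(\bR,C(X,\bT))\bigr)\Longrightarrow H^{p+q}(G,C(X,\bT)),
\]
and to compare it with the degenerate spectral sequence attached to the trivial normal subgroup of $H$, so that the restriction maps in the statement are realized as the maps of abutments induced by the inclusion of extensions. Conceptually the lemma ought to hold because $G/H\cong\bR$ is contractible, so the inclusion $H\hookrightarrow G$ is a homotopy equivalence of topological groups; the content is that this survives in Moore cohomology with the nontrivial coefficient module $C(X,\bT)$, in the specific degrees $q=2,3$.

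First I would compute the inner terms $H^q(\bR,C(X,\bT))$ as $H$-modules. The tool is the pair of $\bR$-equivariant short exact sequences
\[
0\to C(X,\bZ)\to C(X,\bR)\to M_0\to 0,\qquad 0\to M_0\to C(X,\bT)\to H^1(X,\bZ)\to 0,
\]
where $M_0=C(X,\bR)/C(X,\bZ)$ and the group $H^1(X,\bZ)=[X,\bT]$ records the obstruction to lifting a circle-valued function to $\bR$. Since $X$ has finite homotopy type, $C(X,\bZ)$ and $H^1(X,\bZ)$ are countable discrete $\bR$-modules, so by Lemma \ref{lem:cohomR}(1) they carry a trivial $\bR$-action and have no higher $\bR$-cohomology; while $C(X,\bR)$ is a complete metrizable real topological vector space, so by Lemma \ref{lem:cohomR}(2) its $\bR$-cohomology is a real vector space concentrated in degrees $0$ and $1$. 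Chasing the two sequences, I expect $H^q(\bR,C(X,\bT))=0$ for $q\ge 2$, $H^0=C(X,\bT)^{\bR}$, and $H^1(\bR,C(X,\bT))\cong V/L$, where $V=H^1(\bR,C(X,\bR))$ is a real vector space and $L$ is the image of the finitely generated group $H^1(X,\bZ)$.

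This leaves only the rows $q=0,1$, and the proof reduces to the associated long exact sequence
\[
\cdots\to H^n(H,M^{\bR})\to H^n(G,M)\to H^{n-1}\bigl(H,H^1(\bR,M)\bigr)\xrightarrow{\ d_2\ }H^{n+1}(H,M^{\bR})\to\cdots,\qquad M=C(X,\bT),
\]
which I would play against the coefficient sequence $0\to M^{\bR}\to M\to M/M^{\bR}\to 0$ in order to convert the occurrences of $H^\ast(H,M^{\bR})$ into statements about the actual target $H^\ast(H,M)$. The vector-space part $V$ of $H^1(\bR,M)$ contributes nothing in positive $H$-degree, by Lemma \ref{lem:cohomfinite} applied to the field $\bR$ of characteristic zero, so the only potential discrepancies come from the discrete correction $L$, i.e.\ from $H^1(X,\bZ)$.

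The hard part will be precisely this discrete correction: one must show that the finitely generated (and typically $2$-torsion) contributions of $H^1(X,\bZ)$ that appear in the $q=1$ row cancel---through the differential $d_2$ together with the defect of $M^{\bR}\hookrightarrow M$---against the matching terms of the $q=0$ row, so that the restriction map is forced to be an isomorphism. I expect this matching to be clean exactly in total degrees $2$ and $3$, which is the source of the degree restriction: in degree $1$ the term $(V/L)^H$ genuinely separates $H^1(G,M)$ from $H^1(H,M)$, and beyond degree $3$ further vanishing input would be needed. Carrying out this cancellation, while keeping track of the $H$-action on $H^1(X,\bZ)$ throughout, is where the real work lies.
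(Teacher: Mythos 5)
Your setup is sound and your preliminary computation is correct: the restriction map is indeed realized by the morphism of extensions into the Lyndon--Hochschild--Serre spectral sequence for $\bR\triangleleft G$, and chasing the two exponential sequences does give $H^q(\bR,C(X,\bT))=0$ for $q\ge 2$, with $H^1(\bR,C(X,\bT))$ a quotient $V/L$ of a real vector space by the image of $H^1(X,\bZ)$. But the proposal stops exactly where the lemma begins. What you defer as ``the real work'' --- showing that $d_2$ is injective on $E_2^{n-1,1}$ for $n=2,3$, that the image of $d_2\co H^{n-2}(H,V/L)\to H^n(H,M^{\bR})$ is precisely the kernel of $H^n(H,M^{\bR})\to H^n(H,M)$, and that this last map is surjective --- is the entire content of the statement, and you offer only the expectation that the discrete corrections ``cancel,'' with no mechanism. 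Moreover, the route you sketch for completing it forces you to control objects that are genuinely hard to get at in Moore cohomology: the $H$-module $V/L$ (a vector space modulo a countable subgroup, whose positive-degree $H$-cohomology is that of $L$ shifted by one, and whose $H$-structure depends on extension data you have not pinned down), the invariants $M^{\bR}=C(X,\bT)^{\bR}$, and the quotient $M/M^{\bR}$, none of which has accessible cohomology.

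The paper's proof avoids every one of these objects by reversing the order of operations: decompose the coefficients \emph{before} taking any $\bR$-cohomology. Splitting the exponential sequence as $0\to H^0(X,\bZ)\to C(X,\bR)\to C(X,\bT)_0\to 0$ and $0\to C(X,\bT)_0\to C(X,\bT)\to H^1(X,\bZ)\to 0$, one compares $G$- and $H$-cohomology piece by piece: for the countable discrete pieces $H^s(X,\bZ)$ the spectral sequence $H^p(H,H^q(\bR,-))\Rightarrow H^{p+q}(G,-)$ collapses by Lemma \ref{lem:cohomR}(1), so restriction is an isomorphism in \emph{all} degrees; for $C(X,\bR)$, both $H^{k}(G,-)$ for $k>1$ and $H^{k}(H,-)$ for $k>0$ vanish by Lemmas \ref{lem:cohomR}(2) and \ref{lem:cohomfinite}; then two long exact sequences and the five lemma give the isomorphism for every $q\ge 2$, with no $d_2$, no $M^{\bR}$, and no $V/L$ ever appearing. (This also shows your guess about the source of the degree restriction is off: only $q=1$ fails, because of the vector space $H^1(\bR,C(X,\bR))$; the lemma is stated for $q=2,3$ merely because that is what the obstruction theory needs.) If you want to salvage your plan, the honest way to verify your expected cancellations is to feed the pieces of the exponential sequence through your own spectral sequence --- at which point you will have reproduced the paper's argument in more cumbersome packaging.
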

\begin{proof}
Since $X$ has finite homotopy type, $M=H^s(X,\bZ)$ is discrete and
countable for any $s$, and $\bR\triangleleft
G$ acts trivially on it. Thus by Lemma \ref{lem:cohomR}(1),  
$H^q(\bR, M)=0$ for $q>0$.

Now we have an exact sequence of topological $G$-modules
\cite[\S4.2, equation (4)]{MR2116734}
\begin{equation}
\label{eq:CXT}
0 \to H^0(X, \bZ) \to C(X,\bR) \to C(X,\bT) \to H^1(X,\bZ) \to 0.
\end{equation}
We have just observed that the $\bR$-cohomology of the discrete
modules $H^s(X, \bZ)$, $s=0,\,1$, vanishes past degree $0$. 
And $H^q(\bR, C(X, \bR))$ is a topological vector space
for $q=0,\,1$, and vanishes for $q>1$, by Lemma \ref{lem:cohomR}(2).
Now use the spectral sequence $H^p(H, H^q(\bR,C(X,
\bR)) \Rightarrow H^{p+q}(G, C(X, \bR))$. The spectral sequence has
$E_2^{p,q}=0$ for $q>1$. Since $H^q(\bR,C(X,\bR))$ is a real
vector space for $q=0,\,1$ and $H$ is
finite cyclic, we see by Lemma \ref{lem:cohomfinite} that
$E_2^{p,q}=0$ for $p>0$.  Thus (since we've also seen that
$E_2^{p,q}=0$ for $q>1$) $H^k(G, C(X,\bR))=0$ for all
$k>1$. Similarly $H^k(H, C(X,\bR))=0$ for all $k>0$ by Lemma
\ref{lem:cohomfinite}.

Now split \eqref{eq:CXT} into two short exact sequences
\begin{equation}
\label{eq:CXT1}
0 \to H^0(X, \bZ) \to C(X,\bR) \to C(X,\bT)_0 \to 1
\end{equation}
and
\begin{equation}
\label{eq:CXT2}
1 \to C(X,\bT)_0 \to C(X,\bT) \to H^1(X, \bZ) \to 0.
\end{equation}
From \eqref{eq:CXT1}, we find that the $G$- or $H$-cohomology of
$C(X,\bT)_0$ past 
degree $1$ coincides with the $H$-cohomology of the discrete module
$H^0(X, \bZ)$ one degree higher. The conclusion of the lemma now
follows by applying the long exact cohomology sequence of
\eqref{eq:CXT2} along with the spectral sequences $H^p(H, H^q(\bR,
H^s(X, \bZ)) \Rightarrow H^{p+q}(G, H^s(X, \bZ))$, $s=0,\,1$.
\end{proof}
\begin{proof}[Proof of Theorem \ref{thm:Glift}]
We could use the obstruction theory developed in \cite{MR1446378}, which
gives necessary and sufficient conditions for such a lifting.  The key
point is that we can compare the obstructions for $G$ with the similar
obstructions for a $\bZ/2$-action lifting the free action of $\bZ/2$
on $\wE$ with quotient space $E$.  But since the algebra $A$ is the
pull-back of the algebra $CT(E, h)$ along the covering map $\wc$,
such a $\bZ/2$-action exists by \cite{RaeWilliams}.  (In fact, by
\cite[\S6.2]{MR1446378}, pull-back induces an isomorphism $\Br_H(\wE)
\cong \Br(E)$ for the equivariant Brauer group.) Thus the
obstructions vanish for $\bZ/2$, and so we will see that this implies
they also vanish for $G$.

However the simplest argument for existence uses bundle theory and
connections. 
By \cite[Lemma 7.1]{MR2560910}, lifting the $G$-action on $\wE$ to a
$G$-action on $A$ is equivalent to lifting the $G$-action to the
principal $PU$-bundle associated to $A$.  Let $P \to E$ be a
principal $PU$-bundle with Dixmier-Douady class $h$; we can assume it
is smooth. Choose a connection for $P$ (recall that this gives a
canonical way to lift vector fields from $E$ to $P$) and take the
pull-back connection (under the covering map $\wc$)
on the pull-back bundle $\wP \to \wE$. The $\bR$-action on $\wE$ with
quotient $\wX$ is defined by a smooth integrable vector field $V$,
which lifts via 
the connection to a horizontal vector field $\wV$ on $\wP$. Since the
connection is pulled back under $\wc$ from $P$, it is 
$H$-invariant. In other words, the pull-back action of $j\in H$ on
$\wP$ preserves horizontal vectors.  Since the action of $j$
conjugates $V$ to $-V$, it must also
conjugate $\wV$ to $-\wV$ (since horizontal lifts of tangent vectors
are unique).  Thus $\wV$ integrates to an
$H$-equivariant action of $\bR$ on $\wP$, that is, to a $G$-action of
the type we require.

The alternative argument for existence uses \cite[Theorems 4.9 and
5.1]{MR1446378}, together with Lemma \ref{lem:cohomCT}. We note that
the domain and range of the maps $d_2$, $d_2'$, and $d_3$ of those
theorems are naturally the same for both $H$ and $G$.  A diagram chase
then shows that the obstructions are the same for both groups, and
since we have a pull-back action of $H$ on $A$, there must also be a
$G$-action. 

Now we come to the issue of uniqueness. For this we use the exact
sequence of \cite[Theorem 4]{MR2278062}, which is also contained in
\cite[Theorem 5.1]{MR1446378}. In fact, we use this exact sequence
twice, once for $G$ and once for $H$, plus the result $\Br_H(\wE)
\cong \Br(E)$ from \cite[\S6.2]{MR1446378}. The exact sequence,
together with the fact that we have a forgetful map $\Br_G(\wE)\to
\Br_H(\wE)$ induced by the inclusion $H\hookrightarrow G$, gives a
commuting diagram
\begin{equation}
\label{eq:relBr}\hspace*{-5pt}
\xymatrix@C-1pc{
H^2(\wE,\bZ)^G \ar[r]^(.4){d_2''} \ar@{=}[d] & H^2(G, C(\wE,
  \bT)) \ar[r]^(.6){\xi_G} \ar[d] & \ker F_G\ar[d] \ar[r]^(.35){\eta_G} &
  H^1(G,H^2(\wE,\bZ))  \ar@{=}[d]\ar[r]^{d_2'} & 
H^3(G, C(\wE,  \bT))\ar[d] \\
H^2(\wE,\bZ)^H \ar[r]^(.4){d_2''} & H^2(H, C(\wE,\bT)) \ar[r]^(.6){\xi_H}  & \ker
F_H  \ar[r]^(.35){\eta_H} &  H^1(H,H^2(\wE,\bZ))\ar[r]^{d_2'} & \,
H^3(G, C(\wE,  \bT)).} 
\end{equation}
Here $F_G$ is the forgetful map $\Br_G(\wE)\to \Br(\wE)$, and
similarly $F_H$ is the forgetful map $\Br_H(\wE)\to \Br(\wE)$.
The vertical equal signs are canonical isomorphisms by Lemma
\ref{lem:cohomR}(1). 
By Lemma \ref{lem:cohomCT},
the first and last downward arrows in \eqref{eq:relBr} are
also isomorphisms. It 
follows by \eqref{eq:relBr} and the Five-Lemma that the forgetful map
$\ker F_G\to \ker F_H$ is an isomorphism. But we already know that
$\Br_H(\wE) \cong \Br(E)$ via pull-back under $\wc$. So this proves the
uniqueness statement.
\end{proof}
\begin{theorem}
\label{thm:NCmethod}
In the situation of Theorem \ref{thm:Glift}, let $A = CT(\wE, {\wt
c}^*(h))$ and let $\alpha$ be an action of $G = \wO(2)$ lifting the
free action of $\bZ/2$ on $\wE$ with quotient space $E$. Then the
crossed product $A\rtimes_{\alpha} G$ is of the form $CT(E^\#, h^\#)$
for $p^\#\co E^\#\to X$ some {\lp}non-principal{\rp} circle bundle over $X$
and $h^\#\in H^3(\wE^\#, \bZ)$. Furthermore, $w_1(p^\#) = w_1(p)$, and
the other conditions in Theorem \ref{thm:Baraglia1} are also satisfied.
\end{theorem}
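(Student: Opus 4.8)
The plan is to exploit the semidirect-product structure $G = \bR \rtimes \{1,j\}$, in which $\bR$ is normal with $G/\bR \cong \bZ/2$, to factor the crossed product as an iterated one,
\[
A \rtimes_\alpha G \;\cong\; (A \rtimes_{\alpha|_\bR} \bR) \rtimes \bZ/2,
\]
and to treat the two stages by different machinery. The restriction $\alpha|_\bR$ lifts, through the covering $\bR \to SO(2)$, the free $SO(2)$-action on $\wE$ whose quotient is the \emph{principal} circle bundle $\wt p\co \wE \to \wX$; this is precisely the setting of our earlier work. So the first step is to apply the principal-bundle crossed-product duality of \cite{MR2116734} to identify $B := A \rtimes_{\alpha|_\bR} \bR$ as a stable continuous-trace algebra $CT(\wE^\#, \wt h^\#)$ whose spectrum is the T-dual principal circle bundle $\wt p^\#\co \wE^\# \to \wX$, with Chern class and $H$-flux interchanged,
\[
c_1(\wt p^\#) = \wt p_!\bigl((\wt c)^* h\bigr), \qquad \wt p^\#_!(\wt h^\#) = c_1(\wt p).
\]
The Connes Thom isomorphism \cite{MR605351} simultaneously matches the twisted $K$-groups over $\wX$.

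Next I would track the residual $\bZ/2$-action. By the standard description of iterated crossed products for a semidirect product, the generator $j$ acts on $B$ through the automorphism $\alpha_j$ of $A$ together with the automorphism $t \mapsto -t$ of $\bR$. On the spectrum $\wE^\#$ this covers the nontrivial deck transformation $\sigma$ of the double cover $c\co \wX \to X$ while reversing the fibre direction of $\wt p^\#$. Since $\sigma$ is free on $\wX$, the induced $\bZ/2$-action on $\wE^\#$ is automatically free, so that $E^\# := \wE^\#/\{1,j\}$ is a (non-principal) circle bundle $p^\#\co E^\# \to X$ fitting into a covering diagram of exactly the shape of Definition \ref{def:circlebundle}, with $\wt c^\#\co \wE^\# \to E^\#$ the associated double cover. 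Because the action is free and proper, I would invoke the structure theory for crossed products of continuous-trace algebras by such actions (as in \cite{RaeWilliams}) to conclude that $B \rtimes \bZ/2 \cong CT(E^\#, h^\#)$ is again continuous-trace, with a Dixmier--Douady class $h^\# \in H^3(E^\#, \bZ)$ pulling back to $(\wt c^\#)^*(h^\#) = \wt h^\#$. Working with stable algebras is convenient here, since a Morita equivalence then upgrades to an isomorphism once the spectrum and Dixmier--Douady class are matched.

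It remains to verify the axioms of Theorem \ref{thm:Baraglia1}. Axiom (3), $w_1(p^\#) = w_1(p) = \xi$, is immediate, since $p^\#$ is built with the same double cover $c\co \wX \to X$. Axiom (4) descends from the interchange above: reading $c_1(\wt p^\#) = \wt p_!((\wt c)^* h)$ and $\wt p^\#_!(\wt h^\#) = c_1(\wt p)$ as $\bZ/2$-equivariant identities over $\wX$ and pushing them down to the twisted classes in $H^2(X, \bZ_\xi)$ through the Gysin sequence of Remark \ref{rem:Gysin} yields $p_!(h) = c_1(p^\#)$ and $p^\#_!(h^\#) = c_1(p)$. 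Axiom (1), naturality, follows from the functoriality of pull-backs of continuous-trace algebras and of crossed products. For axiom (2), involutivity, I would use Takai duality on the two $\bR$-factors together with the self-duality of a free $\bZ/2$-action: equipping $CT(\wE^\#, (\wt c^\#)^* h^\#)$ with the $G$-action supplied --- uniquely, by Theorem \ref{thm:Glift} --- and crossing by $G$ again returns a stable continuous-trace algebra Morita equivalent to $A$, hence recovers $(E, p, h)$.

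The step I expect to be the main obstacle is the precise identification of the Dixmier--Douady class $h^\#$: both confirming that the finite-group crossed product $B \rtimes \bZ/2$ is genuinely continuous-trace over the quotient (not merely Morita equivalent to an algebra over $X$) and controlling $h^\#$ finely enough to verify the remaining axiom (5), that $h$ and $h^\#$ agree after pull-back to the Poincar\'e bundle $E \times_X E^\#$. I would attack (5) either by a direct comparison on $E \times_X E^\#$, where the two crossed-product correspondences furnish a canonical identification of the pulled-back fluxes, or --- using only the naturality already established in axiom (1) --- by reducing to the universal pair over the classifying space $R$ of Theorem \ref{thm:classspace}, where the matching of $h$ and $h^\#$ was checked in the corollary to Lemma \ref{lem:Tdualityinvol}. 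Combined with Baraglia's uniqueness statement, this pins down the crossed-product dual as the genuine T-dual.
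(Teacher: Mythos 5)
Your main construction follows the paper's proof essentially step for step: you factor $A\rtimes_\alpha G\cong (A\rtimes_{\alpha|_\bR}\bR)\rtimes H$, apply the principal-bundle duality of \cite{MR2116734} to identify $B=A\rtimes_{\alpha|_\bR}\bR$ with $CT(\wE^\#,\wt h^\#)$ over $\wX$, observe that the residual $H$-action on the spectrum is free because it covers the free involution of $\wX$, and then use continuous-trace/pull-back structure theory to get $B\rtimes H\cong CT(E^\#,h^\#)$ with $(\wc^\#)^*(h^\#)=\wt h^\#$ (the paper does this via \cite[Theorem 1.1]{MR920145}, Takai duality, and \cite[Proposition 1.5]{RaeWilliams}). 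Your verifications of axioms (1)--(4) also run along the paper's lines, including involutivity via Takai duality together with the exterior-equivalence uniqueness from Theorem \ref{thm:Glift}.

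The genuine gap is axiom (5), which you rightly flag as the hard point but do not actually prove. Your second route is circular: Baraglia's uniqueness characterizes the dual by \emph{all five} axioms, so it cannot be invoked to identify the crossed-product dual with the homotopy-theoretic dual until (5) is already known. Axioms (1), (3), (4) by themselves do not pin down the flux: the Gysin sequence determines $h^\#$ only modulo $(p^\#)^*H^3(X,\bZ)$, and nothing you have said excludes a natural discrepancy inside that coset (for instance by a torsion class pulled back from $X$). The non-circular variant of your idea would be to use the uniqueness of the involution on $R$ (Lemma \ref{lem:Tdualityinvol}), which does apply once one has (1)--(4) \emph{and} involutivity \emph{and} knows that the crossed-product construction is represented by a self-map of $R$; but $R$ is not a locally compact manifold of finite homotopy type, so the noncommutative construction does not apply to the universal pair, and extending it (via manifold approximations of skeleta and a limiting argument) is real work you have not supplied. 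Your first route, the ``direct comparison on $E\times_X E^\#$,'' is the correct idea and the one the paper executes, but the mechanism is missing: the paper forms the crossed product by the lattice $\bZ\subset\bR$, uses \cite[Proposition 2.1 and Theorem 2.2]{MR920145} to identify the spectrum of $A\rtimes_\alpha\bZ$ with the fiber product $\wE^\#\times_{\wX}\wE$, and uses \cite[Theorem 2.5]{RaeWilliams} to see that $A\rtimes_\alpha\bZ$ is pulled back from $A$, hence (after crossing with $H$ and chasing the covering diagram) that its Dixmier--Douady class is the pull-back of $h$; the same argument run with the dual action shows it is also the pull-back of $h^\#$. Without this device, or an equivalent one, the ``canonical identification of the pulled-back fluxes'' is asserted rather than proved.
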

\begin{proof}
Note that $A\rtimes_{\alpha} G\cong (A\rtimes_{\alpha} \bR)\rtimes
H$. Since $A$ is the stable continuous-trace algebra associated to an
H-flux on a \emph{principal} $S^1$-bundle over $\wX$, the theory
developed in \cite{MR2116734} applies, and $A\rtimes_{\alpha} \bR$ is
a stable continuous-trace algebra with spectrum the T-dual principal
$S^1$-bundle $\wE^\#$ over $\wX$ and Dixmier-Douady class the T-dual
H-flux (for T-duality of principal bundles over $\wX$).  Since the
action of $H$ on $A$ induces the free $\bZ/2$-action on $\wX$ with
quotient $X$, the action of $H$ on the spectrum $\wE^\#$ of
$A\rtimes_{\alpha} \bR$ must also be free. Furthermore,
$A\rtimes_{\alpha} G\cong (A\rtimes_{\alpha} \bR)\rtimes H$ will have
continuous trace \cite[Theorem 1.1]{MR920145}, and will be stable
since $A$ is stable, and thus will be of the form
$CT(E^\#, h^\#)$ for some $h^\#\in H^3(E^\#, \bZ)$.  By Takai duality
(for $H$), $A\rtimes_{\alpha} \bR \cong CT(E^\#, h^\#)\rtimes
\widehat H$ for the dual action of $\widehat H\cong H$.
Via \cite[Proposition 1.5]{RaeWilliams}, we conclude that
$A\rtimes_{\alpha} \bR$ is the pull-back of $CT(E^\#, h^\#)$ along a
principal $\widehat H$-bundle. So
we get a commuting diagram of bundles 
\[
\xymatrix{\wE^\# \ar[r]^{\wc^\#} \ar[d]^{\wt p^\#} \ar[rd]^{\pi^\#} 
& E^\#\ar[d]^{p^\#}\\
\wX \ar[r]^{c} & \, X ,}
\]
and the spectrum $E^\#$ of $A\rtimes_{\alpha} G$ is a non-principal
$S^1$-bundle over $X$ with the same $w_1$ as $E$ (since the associated
double cover is again $\wX \xrightarrow{c} X$). 
The Dixmier-Douady class of $A\rtimes_{\alpha} \bR $ is
$(\wc^\#)^*(h^\#)$. 

Now we can check the condition of Theorems
\ref{thm:Baraglia1}. Conditions (1) and (3) are now
obvious. Condition (2), that T-duality is involutive, follows from
Takai duality,
since when we T-dualize a second time, we get 
\[
((\wc^\#)^*(A \rtimes_\alpha
G)) \rtimes_{\alpha^\#} G \cong ((A\rtimes_\alpha
\bR)\rtimes_{\wh\alpha}\bR) \rtimes H \cong A\rtimes H \cong CT(E, h).
\]
Condition (4) follows from the corresponding condition for T-duality
of principal circles bundles over $\wX$, along with the relationships
between $c_1(\wt p)$ and $c_1(p)$, etc.

To check condition (5), note that we get from
\cite[Proposition 2.1 and Theorem 2.2]{MR920145} a commuting diagram
of principal circle bundles 
\begin{equation}
\label{eq:Pbundle}
\xymatrix{
& (A\rtimes_{\alpha} \bZ)\widehat{\phantom{x}} \ar[rd]^{\wt p^*(\wt p^\#)}
     \ar[ld]_{(\wt p^\#)^*\wt p}
  & \\
\wE^\#\ar[rd]^{\wt p^\#}& & \wE \ar[ld]_{\wt p}\\
& \wX & .}
\end{equation}
Furthermore $A\rtimes_\alpha \bZ\cong (A\rtimes_\alpha
\bR)\rtimes_{\wh\alpha} \bZ$ has Dixmier-Douady class equal to the
pull-backs of both $\wc^*(h)$ and $(\wc^\#)^*(h^\#)$. This diagram
fits into a larger diagram of circle bundles and covering maps
\begin{equation}
\label{eq:Pbundle1}
\xymatrix{
& (A\rtimes_{\alpha} \bZ)\widehat{\phantom{x}}
  \ar[rd]\ar[ld]\ar@{.>}[r]
  & E^\#\times_X E \ar@{-->}[rd]\ar@{-->}[ld]|\hole &\\
\wE^\#\ar[rd]^{\wt p^\#}\ar@{.>}[r]^{\wt c^\#}& E^\#\ar@{-->}[rd]|\hole^(.3){p^\#}&
\wE \ar[ld]_(.3){\wt   p}\ar@{.>}[r]^{\wt c}& E\ar@{-->}[ld]_p\\ 
& \wX \ar@{.>}[r]^c & X &,}
\end{equation}
where for simplicity we've left off the labels on the maps in the top
half of the diagram and we
have used dotted arrows for $2$-fold covering maps and
dashed arrows for non-principal circle bundles. Now over the space
$E^\#\times_X E$ we have the continuous-trace algebra
$(A\rtimes_\alpha \bZ) \rtimes H$. The algebra $A\rtimes_\alpha \bZ$
is pulled back from $A$ by \cite[Theorem 2.5]{RaeWilliams}, and $A$ is
pulled back from $CT(E, h)$, so by commutativity of
\eqref{eq:Pbundle1}, $A\rtimes_\alpha \bZ$ is pulled back from the
algebra $(A\rtimes_\alpha \bZ) \rtimes H$ over $E^\#\times_X E$, which
in turn is pulled back from $CT(E, h)$. Thus the Dixmier-Douady
invariant of $A\rtimes_\alpha \bZ$ is the pull-back of $h\in H^3(E,
\bZ)$. A similar argument using the dual action shows that it is also
equal to the  pull-back of $h^\#\in H^3(E^\#, \bZ)$. So this proves
property (5) of Theorem \ref{thm:Baraglia1} and concludes the proof.
\end{proof}

\section{\texorpdfstring{$K$-Theory of Certain Crossed Products}{K-Theory of Certain Crossed Products}}
\label{sec:KThy}

\begin{definition}
\label{def:ROtwisting}
We begin with a fact about equivariant $K$-theory for a compact group
$H$, namely that it is \emph{$RO(H)$-graded} (see for example
\cite[Chapters IX, X, XIII, and XIV]{MR1413302}). Given a compact
group $H$, an $H$-{\Ca} $A$, and an 
orthogonal representation of $H$ on a finite-dimensional real vector
space $V$, we can twist $H$-equivariant $K$-theory of $A$ by $V$, getting
$K_i^{H,V}(A) = K_i(A \otimes C_0(V))$, where $H$ acts on the second
factor via the linear representation and acts on the tensor product by
the tensor product action.  Note that if $V$ happens to be
a complex vector space and the action of $H$ is complex linear, then
equivariant Bott periodicity gives an isomorphism $K_*^{H,V} \cong
K_*^H$. (This is also true more generally if $V$ is even-dimensional
over $\bR$ and if the action of $H$ factors through $\text{Spin}^c(V)$.)
And if $H$ acts trivially on $V$, $K_*^{H,V} \cong K_{*+\dim
  V}^H$. But in general, the groups $K_*^{H,V}$ are not the same as
$K_*^H$, even modulo a grading shift. In the noncommutative world,
another approach to the groups $K_*^{H,V}$ is possible via graded
Clifford algebras, since
$C_0(V)$ is $KK^H$-equivalent to $\Cl(V)$, the complex Clifford
algebra of $V$ viewed as a \emph{graded} $H$-algebra \cite[Theorem
  20.3.2]{MR1656031}. But this requires introducing graded {\Ca}s,
which we'd prefer to avoid.

For our purposes we will need only the group
$H=\bZ/2$, which has two real characters, the trivial character $1$ and
the non-trivial character $t$, the sign representation $-$ (where the
generator of the group acts by $-1$ on $\bR$).  Thus we have twisted
equivariant $K$-groups $K_*^{\bZ/2, -}$, which are discussed in
greater detail in \cite{MR3044609}. These are modules over the
representation ring $R=R(H)\cong \bZ[t]/(t^2-1)$, which has two
special complementary prime ideals, $I=(t-1)$ and $J=(t+1)$.
The coefficient groups for $K_*^{\bZ/2, -}$
are computed in \cite{MR2545608}, for example. It turns
out that $K_*^{\bZ/2, -}(\bC) \cong \bZ$ (actually $R/J$ as an $R$-module)
for $*\equiv0\pmod{2}$ and $\cong 0$ for $*\equiv 1\pmod{2}$. Twisting
twice brings us back  
to conventional equivariant $K$-theory since a direct sum of two
copies of the sign character is a complex representation, where
equivariant Bott periodicity applies.
\end{definition}
\begin{remark}
\label{rem:KTfailure}
Let $H=\bZ/2$, $R=R(H)$, and let $A$ be an $H$-algebra.
It is important to note that the equivariant $K$-groups $K_*^H(A)$
(even as $R$-modules) \emph{do not determine the groups}
$K_*^{H,-}(A)$ in general. This is due to the failure of the
equivariant K\"unneth Theorem for $K_*^H$ localized at prime ideals
$P$ containing the augmentation ideal $I$. (See
\cite[p.\ 235]{MR911880}.) Here is a way to construct a very specific
counterexample. It is known that the Cuntz {\Ca} $\cO_2$ has vanishing
$K$-groups and is very ``nice'' (nuclear and in the ``bootstrap
class'' where the universal coefficient theorem holds for
$KK$). Furthermore, it is known (originally due to Blackadar
(unpublished), but see \cite[Lemma 4.7]{MR2053753} for a specific
construction) that there are actions of $H$ on $\cO_2$ for which
$\cO_2\rtimes H$ has nontrivial, but uniquely $2$-divisible,
$K$-groups, say $K_0(\cO_2\rtimes H) \cong \bZ[\frac12]$. Choose such
an action and let $A=\cO_2\rtimes H$, which is an $H$-algebra under
the \emph{dual action} to the original action on $\cO_2$. (Here we are
identifying $H$ with its dual in the obvious way.) By Takai
Duality and the Green-Julg Theorem, $K_*^H(A) \cong K_*(A\rtimes H)
\cong K_*(\cO_2) = 0$, while $K_*(A)\ne 0$. Incidentally, one could
also construct an abelian $H$-{\Ca} $A$ with $K_*(A)=0$ but
$K_*^H(A)\ne 0$ by
using deep results from algebraic topology, and then proceed similarly
with $A$ in place of $\cO_2$; see
\cite[Lemma 5.7]{MR2560910} and \cite{MR3044609}. 
\end{remark}
\begin{proposition}
\label{prop:KTfailure}
With $H=\bZ/2$ and $A$ the $H$-{\Ca} defined in Remark
\ref{rem:KTfailure}, $K_*^H(A)=0$ but $K_*^{H,-}(A)\ne 0$.\qed
\end{proposition}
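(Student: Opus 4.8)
The vanishing $K_*^H(A)=0$ is already recorded in Remark \ref{rem:KTfailure} (Takai duality together with the Green--Julg theorem), so the real content is the nonvanishing of $K_*^{H,-}(A)$, and the plan is to compute this group outright. By definition $K_i^{H,-}(A)=K_i^H\bigl(A\otimes C_0(\bR)\bigr)$, where $\bZ/2$ acts on $\bR$ by $t\mapsto -t$. The single fixed point $0\in\bR$ is closed, and its open $H$-invariant complement $\bR\setminus\{0\}$ carries a free action, so I would begin from the $\bZ/2$-equivariant short exact sequence
\begin{equation*}
0\to A\otimes C_0(\bR\setminus\{0\})\to A\otimes C_0(\bR)\to A\to 0,
\end{equation*}
the surjection being evaluation at $0$, so that the quotient carries the original $H$-action on $A$.

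The key step is to identify the equivariant $K$-theory of the ideal. As a free $\bZ/2$-space, $\bR\setminus\{0\}\cong\bZ/2\times(0,\infty)$, so $C_0(\bR\setminus\{0\})$ is equivariantly induced from the trivial subgroup; by Fell absorption the diagonal action untwists, giving an $H$-isomorphism $A\otimes C_0(\bR\setminus\{0\})\cong\operatorname{Ind}_{\{1\}}^{\bZ/2}\bigl(A\otimes C_0(0,\infty)\bigr)$, with $A$ now carrying the trivial action. The Green--Julg theorem then yields $K_i^H\bigl(A\otimes C_0(\bR\setminus\{0\})\bigr)\cong K_i\bigl(A\otimes C_0(0,\infty)\bigr)\cong K_{i+1}(A)$, the last isomorphism because $C_0(0,\infty)$ is a suspension; here $K_*(A)$ denotes ordinary, nonequivariant $K$-theory. (Equivalently, Green's imprimitivity theorem shows the crossed product is Morita equivalent to $A\otimes C_0(0,\infty)$.)

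Feeding this into the six-term exact sequence of the displayed extension, and using that both $K_0^H(A)$ and $K_1^H(A)$ vanish, I expect the hexagon to collapse to isomorphisms $K_0^{H,-}(A)\cong K_1(A)$ and $K_1^{H,-}(A)\cong K_0(A)$. Since $K_0(A)=K_0(\cO_2\rtimes H)\cong\bZ[\frac12]$ is nonzero by construction, this gives $K_1^{H,-}(A)\ne0$ and proves the proposition.

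I anticipate no serious obstacle, since this is essentially a clean excision computation. The one point that demands care is the identification of the free part: one must correctly apply Fell absorption to untwist the action on $A$ and then track the single degree shift coming from the suspension $C_0(0,\infty)$. The $R$-module structures play no role here, as only nonvanishing as abelian groups is needed. Conceptually, this computation also exhibits the promised failure of the equivariant K\"unneth theorem: were it to hold, $K_*^{H,-}(A)=K_*^H\bigl(A\otimes C_0(\bR)\bigr)$ would be assembled from $K_*^H(A)=0$ and would therefore vanish, whereas the excision sequence feeds in the nonequivariant group $K_*(A)$ that $K_*^H$ simply cannot detect.
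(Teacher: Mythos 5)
Your proof is correct and is essentially identical to the paper's own argument: the paper uses the same decomposition of $\bR^-$ into the fixed point $0$ and its free complement $H\times\bR$, obtains the same exact sequence $\cdots \to K_{*+1}^H(A) \to K_{*+1}(A) \to K_*^{H,-}(A) \to K_*^H(A) \to \cdots$ (which it cites from \cite{MR3044609} rather than deriving via induction/imprimitivity as you do), and concludes $K_*^{H,-}(A)\cong K_{*+1}(A)\neq 0$ from the vanishing of $K_*^H(A)$ exactly as in your final step. The only difference is that you spell out the untwisting of the free part by hand where the paper invokes a citation.
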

\begin{proof}
We have $K_*^{H,-}(A)=K_*(A\otimes C_0(\bR^-))$, Since $\bR^-$
equivariantly contains $\{0\}$ (a fixed point) and
$\bR^-\smallsetminus \{0\}$ is equivariantly homeomorphic to $H\times \bR$
(with the $H$-action interchanging the two connected components), we
get a short exact sequence of $H$-algebras
\[
0\to A\otimes C(H) \otimes C_0(\bR) \to A\otimes C_0(\bR^-) \to A \to
0
\]
and thus a long exact sequence of equivariant $K$-groups
\[
\cdots \to K_{*+1}^H(A)  \to K_{*+1}(A) \to K_*^{H,-}(A) \to K_*^H(A) \to
\cdots.
\]
This exact sequence appears as \cite[Proposition 3.3]{MR3044609}.
Since $K_*^H(A)=0$ (in all degrees), we obtain $K_*^{H,-}(A)\cong
K_{*+1}(A) \ne 0$.
\end{proof}
\begin{corollary}[of the proof]
\label{cor:twistedKH}
Let $H=\bZ/2$ and let $A$ be an $H$-{\Ca}. Then there is an exact
sequence of $K$-groups
\[
\cdots \to K_{*+1}^H(A)  \to K_{*+1}(A) \to K_*^{H,-}(A) \to K_*^H(A) \to
\cdots.
\]
In particular, if both $K_*^H(A)=0$ and $K_*(A)=0$, then
$K_*^{H,-}(A)= 0$. \hfill\qedsymbol
\end{corollary}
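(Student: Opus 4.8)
The plan is to observe that the proof of Proposition \ref{prop:KTfailure} used nothing special about the particular algebra $A$ built in Remark \ref{rem:KTfailure}: the short exact sequence of $H$-algebras appearing there, and the long exact sequence of $K$-groups it produces, are valid verbatim for an arbitrary $H$-{\Ca}. So I would simply rerun that argument with a general $A$ and then read off the final assertion from exactness. This is why the statement is a corollary \emph{of the proof} rather than of the proposition itself.

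In detail, I would begin from the definition $K_*^{H,-}(A) = K_*(A \otimes C_0(\bR^-))$ of Definition \ref{def:ROtwisting}. The space $\bR^-$ splits $H$-equivariantly into the fixed point $\{0\}$ and its complement, and $\bR^- \smallsetminus \{0\}$ is $H$-equivariantly homeomorphic to $H \times \bR$, with $H$ acting by translation on the first factor and trivially on the second. Tensoring with $A$ yields the short exact sequence of $H$-algebras
\[
0 \to A \otimes C(H) \otimes C_0(\bR) \to A \otimes C_0(\bR^-) \to A \to 0,
\]
whose associated long exact sequence in equivariant $K$-theory has middle term $K_*^{H,-}(A)$ and quotient term $K_*^H(A)$.

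The one step needing care is the $K$-theory of the ideal. Here I would use the untwisting isomorphism identifying $A \otimes C(H)$ carrying the diagonal $H$-action with the same algebra carrying $H$ acting by translation on the $C(H)$-factor alone; by the induction (Green--Julg) principle its equivariant $K$-theory is the ordinary $K$-theory $K_*(A \otimes C_0(\bR)) \cong K_{*+1}(A)$, the last isomorphism being suspension. This is exactly what converts an equivariant group into the plain $K_{*+1}(A)$ occurring in the sequence, and it is the only place the regular representation on $C(H)$ enters. With this identification the long exact sequence takes the stated form, recovering the sequence of \cite[Proposition 3.3]{MR3044609}.

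Finally, the ``in particular'' clause is immediate: if $K_*^H(A) = 0$ and $K_*(A) = 0$ in every degree, then $K_{*+1}(A) = 0$ as well, and exactness forces $K_*^{H,-}(A) = 0$. I expect no real obstacle; the only bookkeeping is the induction/untwisting identification of the ideal term and the degree shift it carries, both of which are standard.
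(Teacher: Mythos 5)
Your proposal is correct and follows essentially the same route as the paper: the same equivariant decomposition of $\bR^-$ into the fixed point $\{0\}$ and its complement $H\times\bR$, the same short exact sequence of $H$-algebras, and the resulting long exact sequence with the ideal term identified as $K_{*+1}(A)$ via the untwisting/induction isomorphism (a step the paper leaves implicit, citing \cite[Proposition 3.3]{MR3044609}, but which you rightly single out as the only point needing care). Nothing further is needed.
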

In order to use the theory developed in Section \ref{sec:NCG}, we need
to be able to compute the $K$-theory of crossed products by
$\wO(2)$. The following result, modeled on Connes's ``Thom Isomorphism
Theorem'' of \cite{MR605351}, is precisely what is needed.
\begin{theorem}
\label{thm:ConnesO2}
Let $A$ be a {\Ca} equipped with an action of $G=\wO(2)$. Note that by
restriction to the copy of $H=\bZ/2$ in $G$ gotten by lifting $\{1, j\}
\subset O(2)$, $A$ is also a $\bZ/2$-algebra. Then the $K$-theory of
$A\rtimes G$ is naturally isomorphic to $K_*^{\bZ/2, -}(A)$ as defined
in Definition \ref{def:ROtwisting}.
\end{theorem}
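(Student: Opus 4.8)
The plan is to reduce the computation to an $H$-equivariant refinement of Connes' Thom isomorphism. Write $\alpha$ for the given action of $G=\wO(2)=\bR\rtimes H$ on $A$, and let $\alpha$ also denote its restriction to the normal subgroup $\bR$. Since the crossed product by a semidirect product factors as an iterated crossed product, I would begin with the identification
\[
A \rtimes_\alpha G \;\cong\; (A \rtimes_\alpha \bR) \rtimes_\beta H,
\]
where $\beta$ is the residual $H$-action on $A\rtimes_\alpha\bR$ built from the $H$-action on $A$ together with the sign action $j\cdot t=-t$ of $H$ on $\bR$. Because $H\cong\bZ/2$ is compact, the Green--Julg theorem identifies the outer crossed product's $K$-theory with equivariant $K$-theory,
\[
K_*\bigl((A \rtimes_\alpha \bR) \rtimes_\beta H\bigr) \;\cong\; K_*^H(A \rtimes_\alpha \bR),
\]
and both identifications are natural in $A$. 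Everything then comes down to a sign-twisted equivariant Thom isomorphism $K_*^H(A\rtimes_\alpha\bR)\cong K_*^{H,-}(A)$.

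To prove this isomorphism I would first dispose of the base case of the \emph{trivial} $\bR$-action, where $A\rtimes_{\mathrm{triv}}\bR\cong A\otimes C^*(\bR)$. The Fourier transform $C^*(\bR)\cong C_0(\widehat{\bR})$ carries the sign action of $H$ on $\bR$ to the sign action of $H$ on the dual line, since the contragredient of the sign representation is again the sign representation; thus as $H$-algebras $A\rtimes_{\mathrm{triv}}\bR\cong A\otimes C_0(\bR^-)$, and by the very definition in Definition \ref{def:ROtwisting},
\[
K_*^H(A \rtimes_{\mathrm{triv}} \bR) \;\cong\; K_*^H(A \otimes C_0(\bR^-)) \;=\; K_*^{H,-}(A).
\]
This is consistent with the classical statement $K_*(A\rtimes_\alpha\bR)\cong K_{*+1}(A)$ after forgetting the $H$-action, since $K_*(A\otimes C_0(\bR))\cong K_{*+1}(A)$: the usual degree shift is exactly what the sign twist absorbs. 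For a general $\alpha$, I would then follow Connes' reduction of an arbitrary $\bR$-action to the trivial one, realized by an explicit invertible Thom (dual-Dirac) class, verifying at each stage that the construction respects the $H$-symmetry.

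The hard part will be precisely this last step: arranging Connes' passage from a general action to the trivial action so that it is $H$-equivariant, given that $H$ acts on $\bR$ by the sign representation rather than trivially. The cleanest route is to produce the Thom class as an invertible element of an equivariant Kasparov group $KK^H$, using the construction in equivariant $K$-theory of \cite{MR3044609}; the interpolating (``rotation'') homotopy between $\alpha$ and the trivial action must be chosen to commute with the $H$-action, and it is exactly this constraint that forces the sign twist and produces $K_*^{H,-}$ in place of $K_{*+1}^H$. (Since $G$ is amenable and acts properly on $\bR$ with $\underline{E}G\simeq\bR$, a Baum--Connes argument gives an alternative derivation.) As a consistency check, the resulting isomorphism should be compatible with the exact sequence of Corollary \ref{cor:twistedKH}: the open--closed decomposition of $\bR^-$ into the origin (a fixed point) and its free complement $\bR^-\smallsetminus\{0\}\cong H\times\bR$, which relates $K_*^{H,-}(A)$ to $K_*^H(A)$ and $K_{*+1}(A)$, matches under the above identifications the splitting of $A\rtimes_\alpha G$ into the contributions of the fixed locus and the free part of the $G$-action on $\bR$.
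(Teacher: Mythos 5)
Your setup coincides exactly with the paper's: the decomposition $A\rtimes_\alpha G \cong (A\rtimes_\alpha\bR)\rtimes H$, Green--Julg, the reduction to a sign-twisted equivariant Thom isomorphism $K_*^H(A\rtimes_\alpha\bR)\cong K_*^{H,-}(A)$, and the correct observation that for the trivial $\bR$-action the Fourier transform carries the residual $H$-action to the sign action on the dual line, so that the twist (rather than a degree shift) appears. But the entire substance of the theorem is the step you explicitly defer: passing from the trivial action to a general $\alpha$. Saying one should ``follow Connes' reduction, verifying at each stage that the construction respects the $H$-symmetry'' is a plan, not a proof, and the two routes you gesture at are not executed. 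What the paper actually does, and what is missing from your proposal, is: (a) a concrete construction of the natural map $\phi_\alpha^*\co K_*^H(A\otimes C_0(\bR^-))\to K_*^H(A\rtimes_\alpha\bR)$ via Connes' cone trick --- the action $\beta_t(f)(s)=\alpha_{st}(f(s))$ on $CA=C_0([0,1),A)$, which extends to a $G$-action precisely because of the commutation relation $\gamma_j\alpha_t=\alpha_{-t}\gamma_j$, with the map arising as the boundary map of the evaluation-at-$0$ sequence; (b) the equivariant version of Connes' fixed-projection lemma (Lemma \ref{lem:fixedproj}): for an $H$-fixed projection $p$ one perturbs the generator $M$ to $M'=pMp+(1-p)M(1-p)$, and the essential new check is that $j\cdot M'=-M'$, so the perturbed action is still an action of $G$ and not merely of $\bR$; and (c) the Takai-duality argument showing the composite $K_*^H(A)\to K_*^{H,-}(A\rtimes_\alpha\bR)\to K_*^H(A)$ is the identity, reducing via (b) and naturality to $\bC$ with trivial action. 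Item (b) is the crux where the semidirect-product structure genuinely interacts with Connes' argument, and your proposal does not identify it.

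A separate caution about your Baum--Connes aside: as stated it is misleading, and naively it gives the \emph{wrong} answer. Since $\underline{E}G\simeq\bR\cong G/H$, a naive identification $K^{\mathrm{top}}_*(G;A)=KK^G_*(C_0(G/H),A)\cong K_*^H(A)$ would yield $K_*(A\rtimes G)\cong K_*^H(A)$, which is false: by Proposition \ref{prop:KTfailure}, $K_*^H(A)$ and $K_*^{H,-}(A)$ genuinely differ. The correct Connes--Kasparov/Dirac-induction statement for an almost connected group carries a twist by the adjoint action of the maximal compact subgroup on $\mathfrak{g}/\mathfrak{k}$; here $\mathfrak{g}/\mathfrak{k}=\bR$ with $H$ acting by the sign representation, i.e.\ $\bR^-$, and it is exactly this Clifford/adjoint twist that reinstates $K_*^{H,-}(A)$. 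So the Baum--Connes route can be made to work, but only if this twist is tracked explicitly --- which is the same difficulty you deferred in the direct approach.
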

\begin{proof}
Note that $A$ being a $G$-algebra means that $A$ is both a
$\bZ/2$-algebra and an $\bR$-algebra, and that the actions $\alpha$ of $\bR$
and $\gamma$ of $\bZ/2$ are related by the commutation relation 
$\gamma_j(\alpha_t(a)) = \alpha_{-t}(\gamma_j(a))$.  (Here $j$ denotes the
nontrivial element of $\bZ/2$.) 
This commutation rule comes from the multiplication rule
in $G$. Note also for future use that the action $\gamma$ of $j$ on $\bR^-$
(the superscript is to emphasize that the action is by 
reflection) also induces an action of $j$ on the Pontrjagin dual group
$\widehat \bR \cong \bR^*$ (the dual vector space), again given by
reflection. By the isomorphism $A\rtimes G \cong (A\rtimes_\alpha
\bR)\rtimes_\gamma \bZ/2$ and the Green-Julg Theorem \cite{MR625361},
$K_*(A\rtimes 
G)$ is naturally isomorphic to $K_*^{\bZ/2}(A\rtimes_\alpha \bR)$. We want to
show that this in turn is isomorphic to $K_*^{\bZ/2}(A\otimes
C_0(\bR^-))$, the equivariant $K$-theory of the crossed product for the
\emph{trivial action} of $\bR$ on $A$. Here the $\bR$ is really the
dual group $\widehat \bR$ of the original copy of $\bR$, but since
$\bZ/2$ acts on $\widehat \bR$ 
here by the sign representation, the result then follows.  

First we construct natural homomorphisms 
\[
\phi_\alpha^*\co K_*^H(A\otimes C_0(\bR^-))
\to K_*^H(A\rtimes_\alpha \bR)
\]
as follows. Let 
$CA = C_0([0,1), A)\cong C_0([0,1)) \otimes A$ be the \emph{cone}
on $A$. Define an action of $G$ on $CA$ by letting $H$ act by its
original action on $A$ and act trivially on $[0,1)$. The real line
$\bR$ is defined to act on $CA$ by $\beta_t(f)(s) =
\alpha_{st}(f(s))$. This is compatible with the action of $H$ because
of the commutation relation $\gamma_j(\alpha_t(a)) = \alpha_{-t}(\gamma_j
(a))$, so we have an action of $G$ whose restriction to $\bR$ is $\beta$. 
Since the action of $G$ commutes with the action of $C^b([0,1))$ by
central multipliers, we have a $G$-equivariant exact sequence
\[
0 \to C_0((0,1), A) \to CA \xrightarrow{\text{eval\ at\ }0} A \to 0,
\]
and thus a short exact sequence of $H$-algebras
\begin{equation}
0 \to C_0((0,1), A) \rtimes_\beta \bR \to CA \rtimes_\beta \bR
\xrightarrow{\text{eval\ at\ }0} A \rtimes \bR \to 0. 
\label{eq:crprodext}
\end{equation}
But the action of $\bR$ on the copy of $A$ over $0\in [0,1)$ is
trivial, so the crossed product on the right is $A\otimes C_0(\bR^-)$
(as an $H$-algebra). Over each point of $(0,1)$, the action of $\bR$
on the associated copy of $A$ is just a rescaling of $\alpha$, and in
fact we claim that $C_0((0,1), A) \rtimes_\beta \bR \cong C_0((0,1)) \otimes
(A\rtimes_\alpha \bR)$ ($H$-equivariantly). Granting this, the
boundary map in the long 
exact $K^H$-theory sequence attached to \eqref{eq:crprodext} gives the
desired natural map $\Phi$.

To prove the claim, observe that both of the algebras $C_0((0,1), A)
\rtimes_\beta \bR$ and $C_0((0,1)) \otimes (A\rtimes_\alpha \bR)$ are
completions of the algebra $C_c((0,1)\times \bR, A)$, but for
different convolution products. If $s$ denotes the coordinate on
$(0,1)$ and $t$ the coordinate on $\bR$, the product is
\[
(f\star_1 g)(s, t) = \int f(s, u)\alpha_{su} g(s, t - u)\,du
\]
in the first case and
\[
(f\star_2 g)(s, t) = \int f(s, u)\alpha_u g(s, t - u)\,du
\]
in the second case. These products are intertwined by the map
$f\mapsto f'$, where $f'(s, t) = s^{-1}f(s, s^{-1}t)$, which is
clearly $H$-equivariant, and one can
easily see this extends to the desired isomorphism of crossed products.

Next, it is clear that the maps $\phi^*_\alpha$ satisfy the following 
(taken almost \emph{verbatim} from \cite[\S II]{MR605351}):
\begin{axioms}[following Connes]
\leavevmode
\begin{enumerate}
\label{ThomIsoAxioms}
\item If $A=\End(V)$ for some finite-dimensional complex
  representation $V$ of $H=\bZ/2$, equipped with
  the trivial action of $\bR$, then $\phi_\alpha^*$
  is the canonical isomorphism.
\item {\lp}Naturality{\rp} $\phi$ is natural for maps of $G$-algebras.
\item {\lp}Compatibility with suspension{\rp} If $S\alpha$ is the suspension
  of $\alpha$, acting on $SA= A \otimes C_0(\bR)$ (with trivial
  $G$-action on the second factor), and $s^j_A\co\! K_j(A)$ $\to
  K_{j+1}(SA)$ is the usual isomorphism, then
  $\phi^{j+1}_{S\alpha}\circ s^j_A = s^{j}_{A\rtimes_\alpha \bR}
  \circ \phi^{j}_{\alpha}$. 
\end{enumerate}
\end{axioms}

We follow the outline of the proof in \cite{MR605351},
starting with the following lemma.
\renewcommand{\qedsymbol}{}
\end{proof}
\renewcommand{\qedsymbol}{\textsquare}
\begin{lemma}[{Cf.\ \cite[II, Proposition 4]{MR605351}}]
\label{lem:fixedproj}
In the situation of Theorem \ref{thm:ConnesO2}, suppose $p$ is an $H$-fixed
self-adjoint projection in $A$ and $t\mapsto \alpha_t(p)$ is
smooth. Then there is an exterior equivalent action of $G$ on $A$
fixing $p$. 
\end{lemma}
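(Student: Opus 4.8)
The plan is to adapt Connes's argument for \cite[II, Proposition 4]{MR605351} while keeping track of the $H$-symmetry. Since $t\mapsto\alpha_t(p)$ is smooth, the generator $\delta$ of the one-parameter group $\alpha$ is defined on $p$, and $d:=\delta(p)=\frac{d}{dt}\alpha_t(p)|_{t=0}$ is a bounded self-adjoint element of $A$. Differentiating $p=p^2$ gives $d=pd+dp$ together with $pdp=0=(1-p)d(1-p)$, so $d$ is \emph{off-diagonal} with respect to $p$ (working in the multiplier algebra if $A$ is non-unital).

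First I would produce an inner perturbation of the $\bR$-action that annihilates $\delta(p)$. Set $h=i[\delta(p),p]=i\bigl(\delta(p)p-p\delta(p)\bigr)$; a direct check shows $h=h^*$, and using $pdp=0$ and $d=pd+dp$ one finds $[h,p]=i\delta(p)$. Hence the perturbed derivation $\delta'=\delta+\ad(ih)$ satisfies $\delta'(p)=\delta(p)+i[h,p]=0$. Let $\alpha'_t=\Ad(u_t)\circ\alpha_t$ be the corresponding exterior equivalent $\bR$-action, where the unitary cocycle $(u_t)$ is the unique solution of $u_0=1$, $\frac{d}{dt}u_t=i\,u_t\,\alpha_t(h)$. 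Then $\alpha'$ fixes $p$.

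The crux is to verify that perturbing only the $\bR$-part is compatible with $\gamma$, so that $(\alpha',\gamma)$ is again an action of $G=\bR\rtimes H$, exterior equivalent to $(\alpha,\gamma)$ as $G$-actions. Because $p$ is $H$-fixed and $\gamma_j\alpha_t=\alpha_{-t}\gamma_j$, I compute $\gamma_j(\delta(p))=\frac{d}{dt}\alpha_{-t}(p)|_{t=0}=-\delta(p)$, and therefore $\gamma_j(h)=-h$. Applying $\gamma_j$ to the cocycle ODE and setting $v_t=u_{-t}$, both $\gamma_j(u_t)$ and $v_t$ are seen to solve $\frac{d}{dt}w_t=-i\,w_t\,\alpha_{-t}(h)$ with $w_0=1$; uniqueness then forces $\gamma_j(u_t)=u_{-t}$. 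This is exactly the identity ensuring $\gamma_j\alpha'_t\gamma_j^{-1}=\alpha'_{-t}$, i.e.\ that $(\alpha',\gamma)$ is a $G$-action with $(u_t)$ a $G$-equivariant exterior equivalence. Since $\gamma$ already fixes $p$ and $\alpha'$ now does too, the new $G$-action fixes $p$, which is the assertion of the lemma.

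I expect the main obstacle to be precisely this equivariance bookkeeping: one must ensure that the inner perturbation chosen to straighten out $\alpha$ does not destroy the anticommutation relation between the $\bR$- and $H$-directions in $G$. The sign computation $\gamma_j(h)=-h$ is what makes everything consistent, and it is forced by the fact that $j$ reverses the $\bR$-direction; had $h$ transformed with the wrong sign, no $G$-equivariant cocycle could exist. Smoothness of $t\mapsto\alpha_t(p)$ enters only to guarantee that $\delta(p)\in A$, so that $h$ is a genuine bounded self-adjoint element and the cocycle equation has a unique unitary solution.
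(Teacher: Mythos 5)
Your proof is correct and takes essentially the same route as the paper's: the paper writes $\delta = i\,\ad M$ for a self-adjoint unbounded multiplier $M$ and replaces $M$ by the compression $M' = pMp + (1-p)M(1-p)$, and since $M' - M = i[\delta(p),p]$ this is literally the same bounded perturbation as your $h$, with the same sign check ($j\cdot M'=-M'$, i.e.\ $\gamma_j(h)=-h$) guaranteeing extension to $G$. Your ODE-uniqueness argument that $\gamma_j(u_t)=u_{-t}$ just spells out explicitly the $G$-cocycle compatibility that the paper asserts from that sign condition.
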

\begin{proof}
Let $\delta$ be the unbounded derivation which is the infinitesimal
generator of the action of $\bR$. Then $\delta$ maps $A^\infty$, the
smooth vectors for the action of $G$, into itself, $\delta$ commutes
with $*$, and $\alpha_t(a) = e^{t \delta(a)} $. In the terminology of
\cite{MR605351}, $\delta(a) = i [M, a]$, where $M$ is a self-adjoint
``unbounded 
multiplier'' of $A$ (and a genuine multiplier of $A^\infty$), so
$\alpha_t(a) = e^{it\ad M}(a) = \Ad(e^{itM})(a)=e^{itM}ae^{-itM}$. (We've
changed Connes' $H$ to an $M$ to avoid confusion with the group $H=\bZ/2$.)
Since $p$ is $H$-fixed, $j\cdot p=p$. On the other hand, $j\cdot M = -
M$ because of the structure of $G$. We simply follow the same method as
in \cite{MR605351}. Let $M' = pMp + (1-p)M(1-p)$. This is again an
unbounded multiplier of $A$ (and a genuine multiplier of $A^\infty$),
and it clearly commutes with $p$. Furthermore, $M'$ is a
\emph{bounded} perturbation of $M$, and so defines an
exterior-equivalent action of $\bR$ fixing $p$ (as shown in
\cite{MR605351}). This action extends to 
an action of $G$ since $j\cdot M' = -M'$ (since $j\cdot p=p$ and $j\cdot M = -
M$).
\end{proof}
\begin{proof}[Proof of Theorem \ref{thm:ConnesO2} (cont'd)]
Now we return to Connes' strategy for proving Theorem
\ref{thm:ConnesO2}. Propositions 1, 2, and 3 in \cite[II]{MR605351} go
through without change and show that we may  
assume $A$ is unital, and that we are free to replace $A$ by a matrix
algebra over $A$ whenever necessary. Another simple observation is
that replacing $A$ by $A\otimes C_0(\bR^-)$ (with $\bR$ acting
trivially on the second factor) gives us maps
\[
\begin{aligned}
\phi_{\alpha\otimes 1}^*&\co K_*^H(A\otimes C_0(\bR^-)\otimes C_0(\bR^-))
\to K_*^H((A\otimes C_0(\bR^-))\rtimes_{\alpha\otimes 1} \bR)\\
\text{or} \quad
\phi_{\alpha\otimes 1}^*&\co K_*^H(A) \to  K_*^{H,-}(A\rtimes_\alpha
\bR),
\end{aligned}
\]
where we've used equivariant Bott periodicity once. Furthermore,
$\phi_\alpha^*$ is an isomorphism for all $A$ if and only if
$\phi_{\alpha\otimes 1}^*$  is an isomorphism for all $A$ (since we
can tensor with $C_0(\bR^-)$ to go back and forth from one to the
other).

To finish the proof, we apply Takai Duality, which gives an
$H$-equivariant isomorphism $(A\rtimes_\alpha \bR)\rtimes_{\widehat\alpha}
\widehat{\bR} \cong A \otimes \cK$, $\cK$ the algebra of 
compact operators on $L^2(\bR)$. Thus putting together $\phi^*$ for $A$ and
for $A\rtimes_\alpha \bR$, we get maps
\[
K_*^H(A) \xrightarrow{\phi_{\alpha\otimes 1}^*}
K_*^{H,-}(A\rtimes_\alpha \bR)
\xrightarrow{\phi_{\widehat\alpha}^*} K_*^H(A),
\]
and we just need to show this composite is the identity. Because of
compatibility with suspensions (property \ref{ThomIsoAxioms}(3)), it
is enough to do this for $K_0^H$. Without loss of generality, we may
assume $A$ is unital and just consider a class in $K_0^H(A)$
represented by a finitely generated projective $A$ module with
compatible $H$-action. By \cite[\S11.3]{MR1656031}, such a module can
be represented by an equivariant Murray-von Neumann equivalence class of
$H$-invariant projections $p\in \End(V)\otimes A$, for some
finite-dimensional $H$-module $V$. Without loss of generality, we may
replace $A$ by $\End(V)\otimes A$, with trivial $\bR$ action on the
first factor.  Apply Lemma \ref{lem:fixedproj}. This enables us to
change the $\bR$-action within the same exterior equivalence class so
that the action fixes $p$.  After these reductions, $p$ defines a
$G$-equivariant map $\bC\xrightarrow{p} A$.  Because of Axioms 
\ref{ThomIsoAxioms} (especially naturality), this reduces us to
checking the theorem for $\bC$ with trivial action, for which the
result is obvious by Axiom \ref{ThomIsoAxioms}(1).
\end{proof}
Now we apply this result to the situation of Section \ref{sec:NCG}.
\begin{corollary}[{cf.\ \cite[Proposition 6.1]{Baraglia1}}]
\label{cor:K-thy}
Let $X$ be a smooth manifold of finite homotopy type, and let $p\co
E\to X$ be a non-principal circle bundle over $X$ with associated
principal $O(2)$-bundle $\wE\to X$. Choose $h\in H^3(E, \bZ)$ and let
$A = CT(\wE, (\wc)^*(h))$ with the $G$-action $\alpha$ lifting the
free action of $O(2)$ on $\wE$ as in Theorem
\ref{thm:Glift}. Then there is a natural isomorphism $K^*(E, h)\cong
K^{*+1}(A \rtimes G)$. In other words, in the notation of Theorem
\ref{thm:NCmethod}, $K^{*+1}(E^\#, h^\#)\cong K^*(E, h)$.
\end{corollary}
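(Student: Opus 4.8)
The plan is to obtain the corollary by combining Theorem~\ref{thm:ConnesO2} with the fact that $H=\bZ/2$ acts \emph{freely} on $\wE$ with quotient $E$. Writing $K^{*+1}(A\rtimes G)=K_{*+1}(A\rtimes G)$, the first step is immediate from Theorem~\ref{thm:ConnesO2}, which supplies a natural isomorphism $K_{*+1}(A\rtimes G)\cong K^{\bZ/2,-}_{*+1}(A)$. The entire problem is thereby reduced to identifying the twisted equivariant group $K^{\bZ/2,-}_*(A)$ with $K^{*+1}(E,h)$.

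For that identification I would first pass to a crossed product: by the definition of the twisting and the Green--Julg theorem \cite{MR625361}, $K^{\bZ/2,-}_*(A)\cong K_*\big((A\otimes C_0(\bR^-))\rtimes\bZ/2\big)$. The spectrum of $A\otimes C_0(\bR^-)$ is $\wE\times\bR$, on which $\bZ/2$ acts by the deck transformation on $\wE$ and by reflection on $\bR$; because the action on $\wE$ is already free, this action is free and proper. Green's imprimitivity theorem then identifies this crossed product, up to Morita equivalence, with the continuous-trace algebra of the quotient $(\wE\times\bR)/(\bZ/2)$, which is the total space of the real line bundle $L=\wE\times_{\bZ/2}\bR^-$ over $E$, equipped with the Dixmier--Douady class $\pi^*h$. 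Hence $K^{\bZ/2,-}_*(A)\cong K^*(L,\pi^*h)$, compactly supported twisted $K$-theory along the fibres of $\pi\co L\to E$.

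The last step, and the one I expect to be the main obstacle, is a Thom isomorphism for the rank-one real bundle $L\to E$, namely $K^*(L,\pi^*h)\cong K^{*+1}(E,h)$; the shift by one is precisely the suspension contributed by the one-dimensional fibre, and it is this shift that accounts for the ``$*+1$'' in the statement. The delicate point is that $L$ need not be orientable---indeed $w_1(L)$ is $w_1(p)$ pulled back to $E$---so one is using the Thom isomorphism for a possibly non-spin$^c$ line bundle; the gerbe twist survives unchanged because the obstruction $W_3(L)=\beta(w_2(L))$ vanishes for a line bundle, so no extra class in $H^3(E,\bZ)$ is introduced. As a reassuring special case, when $p$ is principal the class $w_1(p)$ vanishes, $L$ is the trivial bundle $E\times\bR$, and the Thom isomorphism degenerates to the ordinary suspension isomorphism, recovering the Connes--Thom degree shift used in \cite{MR2116734}.

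Stringing the three isomorphisms together gives $K^{*+1}(A\rtimes G)\cong K^*(E,h)$, with naturality inherited from the naturality of Theorem~\ref{thm:ConnesO2}, of the Green--Julg and imprimitivity isomorphisms, and of the Thom isomorphism. Rewriting $A\rtimes G\cong CT(E^\#,h^\#)$ by Theorem~\ref{thm:NCmethod} then yields the stated form $K^{*+1}(E^\#,h^\#)\cong K^*(E,h)$. Should the twist bookkeeping in the non-orientable Thom isomorphism prove awkward, I would fall back on the exact sequence of Corollary~\ref{cor:twistedKH}: for the free action one has $K^{\bZ/2}_*(A)\cong K^*(E,h)$ and $K_*(A)\cong K^*(\wE,\wc^*h)$ with the connecting map the pullback $\wc^*$, and comparing this sequence with the Gysin/transfer sequence of the double cover $\wc\co\wE\to E$ pins down $K^{\bZ/2,-}_*(A)$ directly.
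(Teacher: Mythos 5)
Your first two steps are sound and follow the same route as the paper: Theorem~\ref{thm:ConnesO2} reduces everything to $K^{\bZ/2,-}_{*}(A)$, and descent along the free, proper $\bZ/2$-action (Green--Julg plus the fact that the twist is pulled back) identifies that group with the compactly supported twisted $K$-theory of the quotient. In fact your identification of the quotient is \emph{more} careful than the one in the paper's own proof: $(\wE\times\bR^-)/(\bZ/2)$ is the total space of the real line bundle $L=\wE\times_{\bZ/2}\bR^-$ over $E$ with twist $\pi^*h$, whereas the paper simply writes this space as $E\times\bR$. The two agree only when the covering $\wc$ is trivial; if $\xi=w_1(p)\neq 0$ then $\wX$, hence $\wE$, is connected, so $\wc$ is a connected double cover of $E$ and $w_1(L)=p^*\xi\neq 0$.

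The genuine gap is your last step, the untwisted Thom isomorphism $K^*(L,\pi^*h)\cong K^{*+1}(E,h)$, and the justification you give rests on a misconception. $W_3$ is the obstruction to a spin$^c$ structure only on an \emph{oriented} bundle; $K$-orientability requires $w_1=0$ to begin with, and a real line bundle is spin$^c$ if and only if it is trivial. When $w_1(L)\neq 0$ the Thom isomorphism holds only in the form $K^*(L,\pi^*h)\cong K^{*+1}(E,h+\tau)$, where $\tau$ is the graded twist (an $H^1(E,\bF)$-datum in the graded Brauer group) determined by $w_1(L)=p^*\xi$: you are right that no $H^3$ class appears, but an $H^1$ class does, and it changes the group. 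Moreover the gap cannot be repaired, because the untwisted statement is false. Take $X=S^1$ and $E$ the Klein bottle with $h=0$; then $(E^\#,h^\#)=(E,0)$ (both forced, since $c_1(p^\#)=p_!(h)$ lies in $H^2(S^1,\bZ_\xi)=0$ and $H^3(E,\bZ)=0$), so by Theorem~\ref{thm:NCmethod} the corollary would assert $K^{*+1}(E)\cong K^*(E)$, which fails: $K^0$ of the Klein bottle is $\bZ\oplus\bZ/2$ while $K^1$ is $\bZ$. (This is consistent with Theorem~\ref{thm:ConnesO2} itself: a Wang-sequence computation gives $K^*(L)\cong K^*(E)$ with compact supports and \emph{no} degree shift in this example.) Your fallback via Corollary~\ref{cor:twistedKH} cannot rescue the claim either, since comparing that sequence with the Gysin sequence of $\wc$ determines $K^{\bZ/2,-}_*(A)$ only up to exactly the ambiguity separating the twisted from the untwisted answer. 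What your argument, carried out correctly, actually proves is $K^{*+1}(E^\#,h^\#)\cong K^*(E,\,h+\tau_{p^*\xi})$, i.e.\ the duality isomorphism holds with the extra orientation (graded) twist --- this is the form in which graded twists enter Baraglia's treatment --- and the same correction is needed in the paper's proof, where the misidentification of the quotient as $E\times\bR$ conceals the twist.
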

\begin{proof}
By Theorem \ref{thm:ConnesO2}, we may replace $K^{*+1}(A \rtimes G)$
by $K^{*+1}_{H,-}(\wE, (\wc)^*(h))$. Thus we compute
\[
\begin{aligned}
K^{*+1}_{H,-}(\wE,\wc^*h) &= K^{*+1}_H(\wE \times \bR^-,\wc^*h) \\
&\text{(since the $H$-action is free and is a pull-back action)}\\
&\cong K^{*+1} (E \times \bR, h)\cong K^*(E, h). \qquad\mbox{\qedhere}
\end{aligned}
\]
\end{proof}

\section{Circle fibrations}
\label{sec:fibr}

It turns out to be necessary for some purposes
to consider compactifications of spacetime that are not necessarily
circle bundles,  but which also have singular fibers.
This happens for instance in the the Strominger-Yau-Zaslow (SYZ)
formulation \cite{MR1429831} of mirror symmetry. 
In the paper \cite{MW12}, the authors considered a simple case when
singular fibers exist, namely in   
the case of a smooth action of a circle on spacetime which is not
necessarily free. In this section, 
we consider a more general case of a smooth action, which is 
generically free but not
necessarily free, of the affine circle group $O(2)$ on a branched
double cover $\wE$ of spacetime $E$, with the reflection $j\in O(2)$
acting so that $E=\wE/\{1,j\}$. The $O(2)$-action gives 
rise to the (non-principal) circle ``fibration'' $E=\wE/\{1,j\}
\to X= \wE/O(2)$, which in general has singular fibers. 
We will extend the T-duality picture and isomorphism of the previous sections
to include this case. It will be a generalization, both of Baraglia's
Theorem \ref{thm:Baraglia1} and the main result in \cite{MW12}. 

Consider the diagonal action of $O(2)$ on $\wE \times EO(2)$, with
quotient the Borel construction $\wE_{O(2)} = \wE \times_{O(2)}
EO(2)$. Then $p\co\wE_{\bZ/2} =\wE \times_{\bZ/2} EO(2) \to 
\wE_{O(2)}$ is an honest (non-principal) $S^1$-bundle as discussed 
in Section \ref{sec:intro}\,\footnote{Strictly speaking, we've given up
  local compactness and homotopy finiteness in doing this, but this is
  not a serious issue. For most purposes one can replace $EO(2)$ by a
  finite-dimensional approximation which is actually a manifold.}, and
is classified by the pair of 
invariants, $\xi=w_1(p) \in H^1(\wE_{O(2)}, \bZ/2) 
= H_{O(2)}^1(\wE, \bZ/2)$ and $c_1(p) \in H^2(\wE_{O(2)}, \bZ_\xi)=
H_{O(2)}^2(\wE, \bZ_\xi)$. We also assume  
that $E$ comes with an $H$-flux $h$, which we pull back to an integral
$3$-cohomology class on
$\wE \times_{\bZ/2} EO(2)\simeq \wE \times_{\bZ/2} E\bZ/2 = \wE_{\bZ/2}$.
Then by  
the Gysin sequence (Remark \ref{rem:Gysin} in Section \ref{sec:intro}),
$p_!(h) \in H^2(\wE_{O(2)}, \bZ_\xi)= H_{O(2)}^2(\wE, \bZ_\xi)$. 

In analogy with the equivariant classification of principal circle bundles
\cite[Theorem C.47]{MR1929136},  we can deduce the following.
\medskip

\begin{theorem}[$O(2)$-equivariant circle bundles]
\label{thm:equivS1}
Let $M$ be a connected manifold equipped with a smooth action $\alpha$ of
$O(2)$. Define an $O(2)$-equivariant principal circle bundle over $M$
to mean a principal circle bundle $\pi\co Y \to M$, defined by a free
{\lp}smooth{\rp} action
$\alpha_2$ of $SO(2)$ on $Y$ with quotient space $M$, together with an
action $\alpha_1$ 
of $O(2)$ on $Y$, so that the restriction of $\alpha_1$ to
$SO(2)$ commutes with the action $\alpha_2$ defined by the bundle
structure, $\alpha_1$ lifts the original action $\alpha$ of $O(2)$ on $M$,
and the action under $\alpha_1$ of $j$ {\lp}as before, the
standard reflection matrix in $O(2)${\rp} satisfies
$\alpha_1(j)\alpha_2(t)\alpha_1(j)=\alpha_2(\bar t)$
{\lp}note the complex conjugation!{\rp}. This last condition means
that $\alpha_1(j)$ and $\alpha_2$ combine to give another
$O(2)$-action {\lp}which we again call $\alpha_2${\rp} on $Y$.
Then the pair of invariants, the equivariant first Stiefel-Whitney
class $\xi \in  H_{O(2)}^1(M, \bZ/2)$ defined by $\alpha$ as above and the  
equivariant first Chern class $c_1^{O(2)} \in H_{O(2)}^2(M, \bZ_\xi)$,
classify $O(2)$-equivariant principal circle bundles $(\pi\co Y\to M,
\alpha_1)$ over $M$ up to equivalence. 
\end{theorem}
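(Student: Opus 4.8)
The plan is to reduce the $O(2)$-equivariant classification to the non-equivariant classification of (non-principal) circle bundles recorded in Definition~\ref{def:circlebundle}, by passing to the Borel construction, exactly along the lines of the principal case in \cite[Theorem C.47]{MR1929136}. Using the action $\alpha_1$ of $O(2)$ on both $Y$ and $M$, I would form the Borel constructions $Y_{O(2)}=Y\times_{O(2)}EO(2)$ and $M_{O(2)}=M\times_{O(2)}EO(2)$. Since $\pi\co Y\to M$ is $\alpha_1$-equivariant and is a fiber bundle with fiber $S^1$, the induced map $\bar\pi\co Y_{O(2)}\to M_{O(2)}$ is again a fiber bundle with fiber $S^1$; this step is formal and, crucially, does not require $\alpha$ to be free, which is what permits the singular orbits contemplated in this section.

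The next step is to pin down the structure group of $\bar\pi$ and its first invariant. The free $SO(2)$-action $\alpha_2$ commutes with $\alpha_1|_{SO(2)}$ and so survives fibrewise on $Y_{O(2)}$, but the relation $\alpha_1(j)\alpha_2(t)\alpha_1(j)=\alpha_2(\bar t)$ shows that transport around the $j$-direction reverses the fibre orientation. Hence $\bar\pi$ is a \emph{non-principal} circle bundle, and its first Stiefel--Whitney class is exactly the class $\xi\in H^1(M_{O(2)},\bZ/2)=H^1_{O(2)}(M,\bZ/2)$ pulled back from $H^1(B\bZ/2,\bZ/2)$ along $M_{O(2)}\to BO(2)\xrightarrow{Bq}B\bZ/2$, i.e.\ the class already determined by $\alpha$. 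Thus $w_1(\bar\pi)=\xi$ for \emph{every} such bundle, and the only remaining invariant is $c_1(\bar\pi)=:c_1^{O(2)}\in H^2(M_{O(2)},\bZ_\xi)=H^2_{O(2)}(M,\bZ_\xi)$. By Definition~\ref{def:circlebundle} the pair $(\xi,c_1^{O(2)})$ determines $\bar\pi$ up to isomorphism over $M_{O(2)}$.

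It remains to show that $(Y,\alpha_1)\mapsto\bar\pi$ is a bijection onto the circle bundles over $M_{O(2)}$ carrying this fixed $w_1=\xi$. For the inverse I would pull a circle bundle $W\to M_{O(2)}$ back along the principal $O(2)$-bundle $q\co M\times EO(2)\to M_{O(2)}$, obtaining an $O(2)$-equivariant circle bundle $q^*W$ over the free $O(2)$-space $M\times EO(2)$, and then descend along $M\times EO(2)\to M$ to recover a bundle $(Y,\alpha_1)$ together with its fibrewise $SO(2)$-structure $\alpha_2$ (oriented on $Y$, reversed by $j$ precisely as dictated by $\xi$). At the level of classifying maps this is the identification of $O(2)$-equivariant maps $M\to BSO(2)=\bC\bP^\infty$ (with $O(2)$ acting by conjugation through $\bZ/2$) with maps $M_{O(2)}\to \bC\bP^\infty\times_{O(2)}EO(2)=BO(2)$, i.e.\ with circle bundles over $M_{O(2)}$. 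Granting that the two constructions are mutually inverse on isomorphism classes, the theorem then follows from the non-equivariant classification, since $\xi$ and $c_1^{O(2)}$ live in $H^1_{O(2)}(M,\bZ/2)$ and $H^2_{O(2)}(M,\bZ_\xi)$ by the very definition of Borel equivariant cohomology.

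The main obstacle is this last equivalence between $O(2)$-equivariant circle bundles over $M$ and circle bundles over $M_{O(2)}$. The delicate point is that $EO(2)$ is contractible but \emph{not} equivariantly contractible, so one cannot simply invoke equivariant homotopy invariance to descend $q^*W$ from $M\times EO(2)$ to $M$; instead one argues through the universal property of the Borel construction (equivalently, the obstruction-theoretic description of maps into the two-stage space $BO(2)$ with its twisted $K(\bZ,2)$), while tracking the local coefficient system $\bZ_\xi$ produced by the $j$-twisting and allowing $\alpha$ to have non-free orbits. Connectedness of $M$ is used here to guarantee that $\xi$, and hence the fibrewise orientation behaviour, is globally well defined. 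Once this identification is secured, everything else is formal.
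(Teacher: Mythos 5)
Your reduction has the right overall shape, but it stalls exactly where the theorem's actual content lies, and the sentence you offer in place of that step is circular. You propose to identify $O(2)$-equivariant circle bundles over $M$ with (non-principal) circle bundles over $M_{O(2)}$ having $w_1=\xi$, and you correctly observe that the hard direction is descending a bundle from $M\times EO(2)$ to $M$: since $EO(2)$ is not equivariantly contractible, the projection $M\times EO(2)\to M$ is not an equivariant homotopy equivalence when $\alpha$ has non-free orbits (fixed sets of nontrivial subgroups are empty upstairs but can be nonempty downstairs). But your proposed fix --- ``argue through the universal property of the Borel construction, equivalently the obstruction-theoretic description of maps into $BO(2)$ with its twisted $K(\bZ,2)$'' --- does not produce the descent: maps of $M_{O(2)}$ into such classifying spaces classify bundles over $M_{O(2)}$, which is the data you already have; they do not by themselves yield a genuine circle bundle on $M$ itself carrying an honest $O(2)$-action lifting $\alpha$. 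The equivalence between those two kinds of data is precisely what is being proved, so at the decisive point the proposal reduces the theorem to itself. Note also that descent statements of this kind are genuinely false for more general structure groups when the action is non-free (this is the Atiyah--Segal phenomenon: equivariant bundle theory is not Borel bundle theory), so some input special to the abelian structure group $S^1$ must enter; none does in your argument.

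The paper fills this gap with two concrete ingredients. First, it invokes \cite[Theorem C.47]{MR1929136} for the subgroup $SO(2)$: $SO(2)$-equivariant principal circle bundles over $M$ are classified by $H^2_{SO(2)}(M,\bZ)=H^2(M_{SO(2)},\bZ)$ even for non-free actions --- this is exactly where the equivariant-versus-Borel comparison is legitimately resolved, and it is a theorem, not a formality. Second, to install the $j$-action it takes the principal $O(2)$-bundle $\pi_1'\co Y_1\to M_{O(2)}$ with invariants $(\xi,c_1^{O(2)})$, forms the fiber product $Y'$ over $M_{SO(2)}$ with $M\times EO(2)$, maps $Y'$ by an $SO(2)$-equivariant bundle map $\bp$ onto the bundle $Y\to M$ furnished by Theorem C.47, and then checks by hand that the $j$-action on $Y'$ descends through $\bp$: if $\bp(y_1')=\bp(y')$ then $y_1'=t\cdot y'$ for some $t\in SO(2)$, and $\bp(j\cdot t\cdot y')=\bp(\bar t\cdot j\cdot y')=\bp(j\cdot y')$ by the anticommutation relation $jt=\bar t j$. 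Your write-up contains neither the appeal to the $SO(2)$-equivariant classification nor any mechanism for transporting the $j$-action down to $M$; repairing it would require adding both, at which point you would essentially be reproducing the paper's proof.
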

\begin{proof}
By \cite[Theorem C.47]{MR1929136}, the $SO(2)$-equivariant principal
circle bundles over $M$ (what one gets by forgetting the action of $j$
everywhere) are classified by $c_1^{SO(2)} \in H_{SO(2)}^2(M,
\bZ)$. The problem is to augment this result to take the action of
$j$ into account. First suppose that an $O(2)$-equivariant principal
circle bundle $(\pi\co Y\to M, \alpha_1)$ over $M$ is given. We have
already defined $\xi$ as the cohomology class in $H_{O(2)}^1(M, \bZ/2)
= H^1(M_{O(2)}, \bZ/2)$ defining the double covering $M_{SO(2)}\to
M_{O(2)}$. Just thinking of $\pi\co Y\to M$ as an $SO(2)$-equivariant
circle bundle, we have the class $c_1^{SO(2)} \in H_{SO(2)}^2(M,\bZ) =
H^2(M_{SO(2)}, \bZ)$. Since $M_{SO(2)}$ is the double cover of
$M_{O(2)}$ defined by $\xi$ and the action $\alpha_1$ of $j$ does
\emph{not} commute with the bundle action $\alpha_2$ of $SO(2)$,but
rather \emph{anticommutes}, we can also think of this class as living
in $H^2(M_{O(2)}, \bZ_\xi)= H_{O(2)}^2(M, \bZ_\xi)$.

Now let's go the other way. Assume that $\xi$ (already defined by
$\alpha$) and $c_1^{O(2)} \in H_{O(2)}^2(M,\bZ_\xi)$ are given. 
Then we have the principal circle bundle $p\co M\times EO(2)\simeq M
\to M_{SO(2)}$ defined by $\alpha$ as well as another
principal circle bundle $\pi_1\co Y_1\to M_{SO(2)}$
defined by $c_1^{O(2)}$, or if you prefer, a principal $O(2)$-bundle
$\pi'_1\co Y_1\to M_{O(2)}$ with invariants $(\xi,
c_1^{O(2)})$. Forming the fiber product, we get a commuting 
diagram of principal circle bundles 
\begin{equation*}
\xymatrix{
& Y' \ar[dr]^{\pi_1^*p} \ar[dl]_{p^*\pi_1} & \\ M\times EO(2) \ar[dr]_{p} & & Y_1
  \ar[dl]^{\pi_1}\\ & M_{SO(2)} &.}
\end{equation*}
By \cite[Theorem C.47]{MR1929136}, we also have an
$SO(2)$-equivariant principal circle bundle $\pi\co Y\to M$ defined by
$c_1^{O(2)}$, and we can complete the previous diagram to a
commutative diagram 
\begin{equation*}
\xymatrix{
& Y \ar[dl]_{\pi} & \ar[l]_{\bp} Y' \ar[dr]^{\pi_1^*p} \ar[dl]_{p^*\pi_1} & \\ 
M \ar[dr] & \ar[l]^(.6){\text{proj}_1} M\times EO(2) \ar[dr]_{p} & & Y_1
  \ar[dl]^{\pi_1}\\ & M/SO(2) & \ar[l] M_{SO(2)} &.}
\end{equation*}

Here the top horizontal arrow $\bp$ (pointing to the left) is a map of
principal circle bundles, and is equivariant for the $SO(2)$-actions.
Since $\pi'_1\co Y_1\to M_{O(2)}$ is a principal $O(2)$-bundle, we also
have an action of $j$ on $Y'$ with the right intertwining properties,
and we just need to show that it descends to $Y$. We want to define
$j\cdot y = \bp(j\cdot y')$ when $\bp(y') = y$, so one just needs to
check that this is well-defined, i.e., doesn't depend on the choice of
$y'\in \bp^{-1}(y)$. But if also $\bp(y'_1) = y$, then $y'_1 = t\cdot
y'$ for some $t\in SO(2)$, and then 
\[
\bp(j\cdot y'_1) = \bp(j\cdot t\cdot y') =
\bp(\bar t \cdot j \cdot y')= \bp(j \cdot y'),
\]
as required.
\end{proof}

Using this result, we see that 
the pair of invariants, $\xi \in H_{O(2)}^1(\wE, \bZ/2)$ and $p_!(h) \in
H_{O(2)}^2(\wE, \bZ_\xi)$ 
determine an $O(2)$-equivariant principal
circle bundle $\wY \xrightarrow{\wt \pi} \wE$. Since the actions of $O(2)$
and $S^1$ on this space commute, we can divide out by the action of
$\{1,j\}\cong \bZ/2 \subset O(2)$, getting a principal $S^1$-bundle
$Y \xrightarrow{\pi} E$. Consider the commutative diagram (the
analogue of \eqref{eq:Pbundle})
\begin{equation*}
\xymatrix{
& Y \ar[dr]_{\pi} \ar[dl] & \\ E^\# \ar[dr] & & E \ar[dl]\\
& X &}
\end{equation*}
involving the singular spaces $X=\wE/O(2)$ and $E^\#= \wY/O(2)$.
We can replace the singular spaces by their Borel constructions by lifting the
above diagram to the following commutative diagram of actual circle bundles:
\begin{equation*}
\xymatrix{
& Y\times EO(2) \ar[dr]_{\pi\times 1} \ar[dl] & \\
\wY_{O(2)} \ar[dr]^{p^\#} & & \wE_{\bZ/2} \ar[dl]_p\\
& \wE_{O(2)} & .}
\end{equation*}
Now we can apply Baraglia's theorem Theorem \ref{thm:Baraglia1}
to deduce the existence 
of the T-dual $H$-flux $\widehat h \in H^3(Y_{O(2)} , \bZ) = 
H_{O(2)}^3(Y, \bZ)$ satisfying the properties listed in the theorem.
Summarizing all that we've done, we get the following:

\begin{theorem}
Let $\wE$ be a connected manifold with an action of $O(2)$ as above, 
with invariants $\xi=w_1(p)
\in H_{O(2)}^1(\wE,\bF)$ and $c_1(p)\in H_{O(2)}^2(\wE, \bZ_\xi)$. Then
for any choice of H-flux $h\in H^3_{\bZ/2}(\wE,\bZ)$,
there is a unique $O(2)$-equivariant circle bundle
$\wY\xrightarrow{\wt\pi} \wE$   
and a unique T-dual H-flux $h^\#\in H_{O(2)}^3(\wY,
\bZ)$, characterized by the following axioms:
\begin{enumerate}
\item T-duality is natural, so the T-dual of a pull-back is the
  pull-back of the dual.
\item $\xi=w_1(p)=w_1(p^\#)$, so the double covering of $\wE_{O(2)}$
  defined by the T-dual is the same as the original $\wE_{SO(2)} \to
  \wE_{O(2)} $. 
\item $p_!(h) = c_1(\widehat p)$ and $(\widehat p)_!(h^\#) =
  c_1(p)$ in $H_{O(2)}^2(\wE, \bZ_\xi)$. 
\item $h$ and $h^\#$ agree after pull-back to the correspondence
  space $Y\times EO(2) \simeq Y$. 
\end{enumerate}
\end{theorem}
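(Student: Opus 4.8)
The plan is to reduce the singular, equivariant problem to the honest circle-bundle situation on the Borel construction, where Theorem \ref{thm:Baraglia1} applies verbatim, and to use Theorem \ref{thm:equivS1} as the dictionary that makes this reduction faithful. First I would note that all of the input data has already been rephrased in Borel-construction terms in the discussion preceding the statement: the map $p\co \wE_{\bZ/2}\to \wE_{O(2)}$ is an honest (non-principal) circle bundle with $w_1(p)=\xi$ and Chern class $c_1(p)\in H_{O(2)}^2(\wE,\bZ_\xi)$, while the flux $h\in H^3_{\bZ/2}(\wE,\bZ)=H^3(\wE_{\bZ/2},\bZ)$ lives on its total space. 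Thus $(p,h)$ is precisely a pair over $\wE_{O(2)}$ in the sense required by Baraglia's theorem.

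Second, I would apply Theorem \ref{thm:Baraglia1} on $\wE_{O(2)}$ to produce a T-dual circle bundle and dual flux. By the Gysin sequence (Remark \ref{rem:Gysin}), the relevant invariant of the dual bundle is $p_!(h)\in H_{O(2)}^2(\wE,\bZ_\xi)$, and the datum $(\xi,p_!(h))$ is exactly what Theorem \ref{thm:equivS1} associates to an $O(2)$-equivariant principal circle bundle $\wY\xrightarrow{\wt\pi}\wE$. Dividing by the subgroup $\{1,j\}\cong\bZ/2$ recovers the circle fibration $Y\xrightarrow{\pi}E$ and, after passing to Borel constructions, the honest bundle $\wY_{O(2)}\xrightarrow{p^\#}\wE_{O(2)}$ that is Baraglia-dual to $p$. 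The dual flux $h^\#\in H_{O(2)}^3(\wY,\bZ)$ is then the one supplied by Theorem \ref{thm:Baraglia1}, and its existence and uniqueness at the Borel level are immediate from that theorem.

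Third, I would check that the four axioms are nothing but the Borel-construction translates of the corresponding conditions in Theorem \ref{thm:Baraglia1}. Axiom (1) follows from the naturality of Baraglia's construction; axiom (2) holds because the associated double cover is $\wE_{SO(2)}\to \wE_{O(2)}$, determined by $\xi$, which the construction preserves; axiom (3) is the pair of Gysin relations $p_!(h)=c_1(\widehat p)$ and $(\widehat p)_!(h^\#)=c_1(p)$, the first true by the very construction of $\wY$ via Theorem \ref{thm:equivS1} and the second being the second half of Baraglia's condition (4); and axiom (4) is Baraglia's condition (5) pulled back along $Y\times EO(2)\simeq Y$.

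The main obstacle is the uniqueness of the \emph{equivariant} dual bundle $\wY$, as opposed to merely the dual flux. Uniqueness of $h^\#$ transfers directly from Baraglia's theorem on $\wE_{O(2)}$, but uniqueness of $\wY$ requires that the passage from $O(2)$-equivariant circle bundles over $\wE$ to honest circle bundles over $\wE_{O(2)}$ be a bijection on isomorphism classes. This faithfulness is exactly the content of Theorem \ref{thm:equivS1}, which identifies such equivariant bundles with pairs $(\xi,c_1^{O(2)})\in H_{O(2)}^1(\wE,\bF)\times H_{O(2)}^2(\wE,\bZ_\xi)$. Thus the genuine work has been isolated in that classification result, and once it is invoked the present theorem becomes a bookkeeping translation; the only remaining technical caveat is that $EO(2)$ is infinite-dimensional, which is harmless since one may replace it throughout by a finite-dimensional manifold approximation.
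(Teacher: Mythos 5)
Your proposal is correct and follows essentially the same route as the paper: pass to Borel constructions, where $p\co \wE_{\bZ/2}\to\wE_{O(2)}$ is an honest circle bundle, apply Theorem \ref{thm:Baraglia1} there, and use Theorem \ref{thm:equivS1} to convert the data $(\xi,\,p_!(h))$ into the $O(2)$-equivariant circle bundle $\wY\to\wE$. Your explicit remarks on the uniqueness of $\wY$ and on replacing $EO(2)$ by a finite-dimensional approximation only spell out points the paper treats implicitly or in a footnote.
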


Therefore we conclude, assuming for simplicity that $E$ is compact and
using Corollary \ref{cor:K-thy} and the appendix in \cite{MW12}, that 
$$ 
K^{\bullet}(E, h) \cong RK^{\bullet + 1}(\wY_{O(2)}, h^\#) \cong
K^{\bullet + 1}_{O(2)}(\wY, h^\#)^{\widehat{}}, 
$$
where $RK^{\bullet }(\wY_{O(2)}, h^\#)$ denotes the representable
twisted K-theory of the compactly generated space $\wY_{O(2)}$ and 
$K^{\bullet + 1}_{O(2)}(\wY, h^\#)^{\widehat{}}$ denotes the
$I(O(2))$-adic completion of  
the twisted equivariant K-theory $K^{\bullet + 1}_{O(2)}(\wY,
h^\#)$. Here $I(O(2))$ denotes the  
augmentation ideal of the representation ring $K^0_{O(2)}(\pt) =
R(O(2))$.

\bibliographystyle{hplain}
\bibliography{circlebundles}
\end{document}